\def\UrlSpecials{\do\~{\kern -.15em\lower .7ex\hbox{~}\kern .04em}} \catcode`~=13 
\newcommand{\nn}{\nonumber}
\newcommand{\calA}{\mathcal{A}}
\newcommand{\calC}{\mathcal{C}}
\newcommand{\calK}{\mathcal{K}}
\newcommand{\calL}{\mathcal{L}}
\newcommand{\calN}{\mathcal{N}}
\newcommand{\calS}{\mathcal{S}}
\newcommand{\calT}{\mathcal{T}}
\newcommand{\calW}{\mathcal{W}}
\newcommand{\ba}{\mathbf{a}}
\newcommand{\bA}{\mathbf{A}}
\newcommand{\bB}{\mathbf{B}}
\newcommand{\be}{\mathbf{e}}
\newcommand{\bg}{\mathbf{g}}
\newcommand{\bG}{\mathbf{G}}
\newcommand{\bh}{\mathbf{h}}
\newcommand{\bi}{\mathbf{i}}
\newcommand{\bI}{\mathbf{I}}
\newcommand{\bP}{\mathbf{P}}
\newcommand{\bQ}{\mathbf{Q}}
\newcommand{\bs}{\mathbf{s}}
\newcommand{\bu}{\mathbf{u}}
\newcommand{\bU}{\mathbf{U}}
\newcommand{\bv}{\mathbf{v}}
\newcommand{\bV}{\mathbf{V}}
\newcommand{\bw}{\mathbf{w}}
\newcommand{\bW}{\mathbf{W}}
\newcommand{\bx}{\mathbf{x}}
\newcommand{\bX}{\mathbf{X}}
\newcommand{\by}{\mathbf{y}}
\newcommand{\bz}{\mathbf{z}}
\newcommand{\bphi}{\bm{\phi}}
\newcommand{\bSigma	}{\bm{\Sigma}}
\newcommand{\bPhi}{\bm{\Phi}}
\DeclareMathOperator{\diag}{diag}
\newtheorem{theorem}{Theorem}
\newcommand{\qednew}{\nobreak \ifvmode \relax \else
      \ifdim\lastskip<1.5em \hskip-\lastskip
      \hskip1.5em plus0em minus0.5em \fi \nobreak
      \vrule height0.75em width0.5em depth0.25em\fi}
\newtheorem{pro}{Proposition}
\newtheorem{lem}{Lemma}
\newtheorem{rem}{Remark}
\DeclareMathOperator{\sign}{sign}
\DeclareMathOperator{\rad}{rad}
\DeclareMathOperator{\dist}{dist}
\title{Phase Transitions in Phase-Only Compressed Sensing}
\author{Junren Chen\thanks{Department of Mathematics, The University of Hong Kong. email: \texttt{chenjr58@connect.hku.hk}}\and Lexiao Lai\thanks{Department of Mathematics, The University of Hong Kong. email: \texttt{lai.lexiao@hku.hk}} \and Arian Maleki\thanks{Department of Statistics, Columbia University. email: \texttt{mm4338@columbia.edu}}}
\date{\today}
\begin{document}
    \maketitle
\begin{abstract}
    The goal of phase-only compressed sensing is to recover a structured signal $\bx$ from the phases $\bz = \sign(\bPhi\bx)$, where $\bPhi$ is a {\it complex-valued} sensing matrix. As demonstrated in prior studies, exact reconstruction of the signal's direction is possible by reformulating the problem as a linear compressed sensing problem and applying basis pursuit (i.e., constrained norm minimization). For $\bPhi$ with i.i.d. complex-valued Gaussian entries, this paper demonstrates that the phase transition is approximately determined by the statistical dimension of the descent cone associated with a {\it signal-dependent norm}. Leveraging this insight, we derive asymptotically precise formulas for the phase transition locations in phase-only sensing of both sparse signals and low-rank matrices. Our results prove that the minimum number of measurements required for exact recovery is smaller for phase-only measurements than for traditional linear compressed sensing. For instance, in recovering a 1-sparse signal with sufficiently large dimension, phase-only compressed sensing requires approximately 68\% of the measurements needed for linear compressed sensing. This result disproves earlier conjecture suggesting that the two phase transitions coincide.
Our proof hinges on the Gaussian min-max theorem and the key observation that, up to a signal-dependent orthogonal transformation, the sensing matrix in the reformulated problem behaves as a nearly Gaussian matrix.
\end{abstract}
\section{Introduction}

\subsection{Main Challenges}
The recovery of structured signals from a limited number of nonlinear observations, referred to as {\it nonlinear compressed sensing}, has garnered significant research attention over the past decade; see e.g., \cite{plan2016generalized,plan2017high,xu2020quantized,genzel2023unified,chen2024unified,dirksen2019quantized,thrampoulidis2020generalized,dirksen2021non,chen2024optimal,chen2022quantizing,cai2016optimal,candes2015phase}. An important example within this area is 1-bit compressed sensing \cite{plan2013one,plan2012robust,jacques2013robust,matsumoto2024binary}, which studies the sensing mechanisms where the observations are quantized coarsely. Here, the goal is to recover a structured signal (e.g., a sparse vector) $\bx \in \mathbb{S}^{n-1} := \{\bu \in \mathbb{R}^n : \|\bu\|_2 = 1\}$ from $\by = \sign(\bA\bx)$, where $\bA \in \mathbb{R}^{m \times n}$ is the sensing matrix.

A generalization of 1-bit compressed sensing to the complex domain leads to another nonlinear model called phase-only compressed sensing (PO-CS) \cite{boufounos2013angle,boufounos2013sparse,feuillen2020ell,jacques2021importance,chen2023uniformpo,chen2024robust}. PO-CS studies the recovery of $\bx\in\mathbb{S}^{n-1}$ from $\bz=\sign(\bPhi\bx)$ where $\bPhi$ is a {\it complex-valued} sensing matrix. Here,  we let $\sign(c)=c/|c|$ for nonzero $c\in \mathbb{C}$ and conventionally set $\sign(0)=1$. Note that this function extracts the phase information of a complex number. PO-CS also generalizes the recovery of {\it unstructured} signals from the phases of complex measurements (i.e., phase-only measurements) studied in a line of earlier works; see \cite{oppenheim1981importance,chen2023signal} and the references therein.

%Let us focus on $\bPhi$ with i.i.d. $\calN(0,1)+\calN(0,1)\bi$ entries in the present paper. 
In stark contrast to 1-bit compressed sensing, exact signal reconstruction is achievable in PO-CS. This was first proved by Jacques and Feuillen \cite{jacques2021importance} through analyzing a {\it linearization} approach, as we restate here. Let $\Re(\cdot)$ and $\Im(\cdot)$ denote the real and imaginary parts of complex numbers respectively. It is straightforward to see that if $\bz=\sign(\bPhi\bx)$, then 
\begin{align}
    \frac{1}{\sqrt{m}}\Im(\diag(\bz^*)\bPhi)\bu= 0\label{part1mea}
\end{align} holds for any   $\bu\in \{\lambda\bx:\lambda>0\}$. To fix the scaling of $\bu$ and hopefully obtain a unique solution, we also add 
 \begin{align}
     \frac{1}{m}\Re(\bz^* \bPhi)\bu=1.\label{part2mea}
 \end{align} 
The two equalities mentioned in (\ref{part1mea}) and (\ref{part2mea}) enable us to transform PO-CS problem to a linear compressed sensing problem. More specifically, we can use  (\ref{part1mea}) and (\ref{part2mea}) and aim to solve the basis pursuit problem:
\begin{align}\label{1.2}
    \hat{\bx} = {\rm arg}\min~f(\bu),\quad{\rm s.t.~~}\bA_{\bz}\bu = \be_1,  
\end{align}
where $\bA_{\bz}$ is the ``new sensing matrix'' defined as 
\begin{align}\label{Azphi}
    \bA_{\bz} := \begin{bmatrix}
        \frac{1}{m}\Re(\bz^*\bPhi)\\
        \frac{1}{\sqrt{m}}\Im(\diag(\bz^*)\bPhi)
    \end{bmatrix}\in \mathbb{R}^{(m+1)\times n},
\end{align}
and $f(\cdot)$ is a norm that promotes certain structure (e.g., $\ell_1$ norm for recovering sparse signals). Once $\hat{\bx}$ is calculated, we   use $\bx^\sharp ={\hat{\bx}}/\|\hat{\bx}\|_2$ as the final estimate for $\bx$. An important feature of \eqref{Azphi} %which is important in our analysis 
is that $\bA_{\bz}$ depends on the true signal $\bx$. %This point will be discussed further in Section \ref{sec2}. \\

Jacques and Feuillen \cite{jacques2021importance} studied the  restricted isometry property (RIP) of $\bA_{\bz}$. In their study they assumed that $\bx \in \calK$, where $\calK$ is a symmetric cone. Then, they showed that for an i.i.d. complex Gaussian matrix $\bPhi$ and a fixed $\bx \in \mathbb{S}^{n-1}$, $\bA_{\bz}$ satisfies the RIP with distortion $\delta$ over  $\calK$ with high probability (w.h.p.), so long as the measurement number is at the order of $\delta^{-2}\omega^2((\calK-\mathbb{R}\bx)\cap\mathbb{S}^{n-1})$. Here $\omega(\calK)=\mathbbm{E}\sup_{\bu\in\calK}\bg^\top \bu$ with $\bg\sim\calN(0,\bI_n)$ denotes the Gaussian width of a set $\calK\subset \mathbb{R}^n$.  This implies that, for any fixed $s$-sparse $\bx$, the above recovery procedure with $f(\bu)=\|\bu\|_1$ succeeds
(i.e., $\bx^\sharp = \bx$) w.h.p. if the number of phases $m$ is at the order of $s\log(\frac{en}{s})$.
    In a follow-up work \cite{chen2023uniformpo}, the authors showed through a covering argument that $\bx^\sharp = \bx$ simultaneously holds for {\it all} $s$-sparse signals $\bx$ in $\mathbb{S}^{n-1}$ if $m\gtrsim s\log(n)$. More recently,  \cite{chen2024robust} established a stronger instance optimal guarantee  by a refined covering argument: for some absolute constants $C_1$ and $C_2$, if  $m\ge C_1s\log(\frac{en}{s})$, then w.h.p. we have    $$\|\bx^\sharp-\bx\|_2\le \frac{C_2 \min_{\bu\in\Sigma^n_s}\|\bu-\bx\|_1}{\sqrt{s}},\quad \forall \bx\in \mathbb{S}^{n-1},$$ 
  where $\Sigma^n_s$ denotes the set of all $s$-sparse  signals in $\mathbb{R}^{n}$. %We mention that  the recovery of complex signals and the robustness to various noise/corruption patterns were also established in \cite{chen2023uniformpo} and \cite{chen2024robust}, respectively. 
  It is important to note that all these results characterize only the order of $m$ needed for successful recovery and involve unspecified  multiplicative constants that are typically very large. This leads us to the following question that we aim to address in this paper:
  $$\textit{What~is~the~exact~number~of~phase-only~measurements~required~for~(\ref{1.2})~to~succeed?}$$    
Let us call this threshold $\zeta_{\rm PO}$. 
  As each phase-only observation can be viewed as a real scalar in $[0,2\pi)$, it is highly desired to compare $\zeta_{\rm PO}$ with the phase transition locations of   compressed sensing from $m$ {\it real-valued} linear measurements  via basis pursuit, i.e., the recovery of a structured $\bx\in\mathbb{S}^{n-1}$ from $\by = \bA\bx$ under $\bA\sim \calN^{m\times n}(0,1)$ through solving 
 \begin{align}
     \hat{\bx} = {\rm arg}\min~f(\bu),\quad {\rm s.t.}~~\bA\bu= \by.\label{linearbs}
 \end{align} 
 We denote the phase transition threshold of this problem by $\zeta_{\rm LN}$, which 
has been well-documented in, for instance, \cite{amelunxen2014living,donoho2006high,donoho2009counting,stojnic2010l,stojnic2013framework,donoho2009message, maleki2013asymptotic}. In \cite{jacques2021importance}, the authors reported an empirical observation that ``{\it PO-CS (from $\sign(\bPhi \bx)$) requires about twice the number of measurements required for perfect signal recovery in linear compressed sensing (from $\bPhi\bx$, i.e., $\Re(\bPhi)\bx$ and $\Im(\bPhi)\bx$)}'' \cite[P. 4154]{jacques2021importance}. Note that such a claim translates to $\zeta_{\rm PO}\approx \zeta_{\rm LN}$ in our notation. For unstructured signals, one can indeed prove $\zeta_{\rm PO}=n-1 \approx n = \zeta_{\rm LN}$ via non-probabilistic matrix analysis techniques \cite[Sec. 6]{chen2023signal}. This discussion leads us to the second question we aim to address in our paper:  
$$\textit{Can we rigorously prove or disprove $\zeta_{\rm PO}\approx \zeta_{\rm LN}$ for structured signals?}$$

\subsection{Our Contributions}  
The main aim of this paper is to establish the asymptotically exact number of measurements required for perfect recovery in (noiseless) PO-CS. Informally, we show that for a fixed signal $\bx\in\mathbb{S}^{n-1}$, there exists a threshold $\zeta_{\rm PO}(\bx;f)$ defined as 
  \begin{align}\label{zetapo} 
      \zeta_{\rm PO}(\bx;f) = \left[\mathbbm{E}\sup_{\substack{\bu\in \calT_f(\bx)\\ \|\bQ_{\bx}\bu\|_2= 1}}\big\langle (\bI_n-\bx\bx^\top)\bg,\bu\big\rangle\right]^2 
  \end{align}
  such that (\ref{1.2}) exactly recovers the direction of $\bx$ w.h.p. if $m\ge (1+o(1))\cdot\zeta_{\rm PO}(\bx;f)$, while the recovery fails (i.e., $\bx^\sharp\ne \bx$) w.h.p. if $m\le (1-o(1))\cdot\zeta_{\rm PO}(\bx;f)$. In (\ref{zetapo}), $\calT_f(\bx)=\{\bu\in \mathbb{R}^n:f(\bx+t\bu)\le f(\bx),~\exists t>0\}$ is the descent cone of $f$ at $\bx$, $\bQ_{\bx}:=\bI_n+(\sqrt{\frac{\pi}{2}}-1)\bx\bx^\top$ and $\bg \sim\calN(0,\bI_n)$. See Theorem \ref{mainthm} for the formal statement of this result. We establish this phase transition result by using the (convex) Gaussian min-max theorem (e.g., \cite{thrampoulidis2018precise}) that extends Gordon's Gaussian comparison inequality \cite{gordon1985some,gordon1988milman}, along with the key observation that $\bA_{\bz}$ is ``near-Gaussian'' when transformed by an orthogonal matrix.

  To address the second question we raised in the previous subsection, i.e. comparison of $\zeta_{\rm PO}$ and $\zeta_{\rm LN}$, we first establish \begin{align} \label{sdexpress}
      \zeta_{\rm PO}(\bx;f)\approx \delta\big(\calT_{f_{\bx}}(\bx)\big) 
  \end{align} in Proposition \ref{pro1}, where $f_{\bx}(\cdot)$ is a signal-dependent norm defined as $f_{\bx}(\bu):= f\big(\bu - (1-\sqrt{2/\pi})\langle \bx,\bu\rangle\bx\big)$, and $\delta(\calC) =  \mathbbm{E}\big[\sup_{\bu\in\calC \cap \mathbb{S}^{n-1}}\langle 
\bg,\bu \rangle\big]^2$ with  $\bg \sim \calN(0,\bI_n)$, denotes the statistical dimension of the closed convex cone $\calC$ in $\mathbb{R}^n$. This expression allows us to use the mechanisms proposed in \cite{amelunxen2014living} (for the calculation of statistical dimension) to derive explicit formulas for $\zeta_{\rm PO}(\bx;f)$. %Concerning the recovery of $\bx$, $f_{\bx}$ is generally preferable over the original $f$  in the sense that it penalizes signals being close to $\bx$ less (e.g., $f_{\bx}(\bx) = \sqrt{2/\pi}\cdot f(\bx)$ v.s. $f_{\bx}(\bu)=f(\bu)$ for $\bu$ being orthogonal to $\bx$). 
To compare $\zeta_{\rm PO}$ with $\zeta_{\rm LN}$, we should note that   the phase transition for the procedure (\ref{linearbs}) in linear compressed sensing is located at (see e.g. \cite{amelunxen2014living})
\begin{align}\label{zetaln}
    \zeta_{\rm LN}(\bx;f) = \delta\big(\calT_f(\bx)\big). 
\end{align} 
We will show that $\zeta_{\rm PO}(\bx;f)\le \zeta_{\rm LN}(\bx;f)$ and that the   observation $\zeta_{\rm PO}(\bx;f)\approx \zeta_{\rm LN}(\bx;f)$ made in \cite{jacques2021importance} is in fact {\it not} correct. Specifically, we derive asymptotically exact formulas for $\zeta_{\rm LN}(\bx;f)$ in the canonical cases of recovering a sparse vector via minimizing $\ell_1$ norm $f(\bu)=\|\bu\|_1$  and reconstructing a low-rank matrix $\bX$ via minimizing nuclear norm $f(\bU)=\|\bU\|_{nu}$,\footnote{The nuclear norm of a matrix is defined as the sum of its singular values.} referred to as sparse signal and low-rank matrix recovery respectively. In these two cases, we show that   $\zeta_{\rm PO}$ and $\zeta_{\rm LN}$ typically differ by a multiplicative factor. More precisely, $\zeta_{\rm PO}/\zeta_{\rm LN}$ is bounded away from $1$ as long as the sparsity or the rank is bounded away from the ambient dimension. %This gap becomes larger for more structured signals (e.g., sparser vectors, matrices of lower rank), while asymptotically vanishing for near unstructured signals (e.g., almost dense signal, matrix $\bX\in\mathbb{R}^{p\times q}$ with rank close to $\min\{p,q\}$). 

Figure \ref{fig1} shows that under fixed $n$, $\zeta_{\rm PO}(\bx;\ell_1)/\zeta_{\rm LN}(\bx;\ell_1)$ is monotonically increasing in the sparsity of the vector $s$ and tends to $1$ as $s\to n$. Some observation here already contradicts the numerical observation $\zeta_{\rm PO}\approx\zeta_{\rm LN}$   \cite{jacques2021importance}, e.g., it shows that $\zeta_{\rm PO}(\bx;\ell_1)<0.75\cdot\zeta_{\rm LN}(\bx;\ell_1)$ for $1$-sparse $\bx\in \mathbb{S}^{999}$. 
More precise statements of the ratio $\zeta_{\rm PO}/\zeta_{\rm LN}$ can be found in Section \ref{ratio} and Figure \ref{fig:ratio}.

\begin{figure}[ht!]
    \begin{centering}
        \includegraphics[width=0.55\columnwidth]{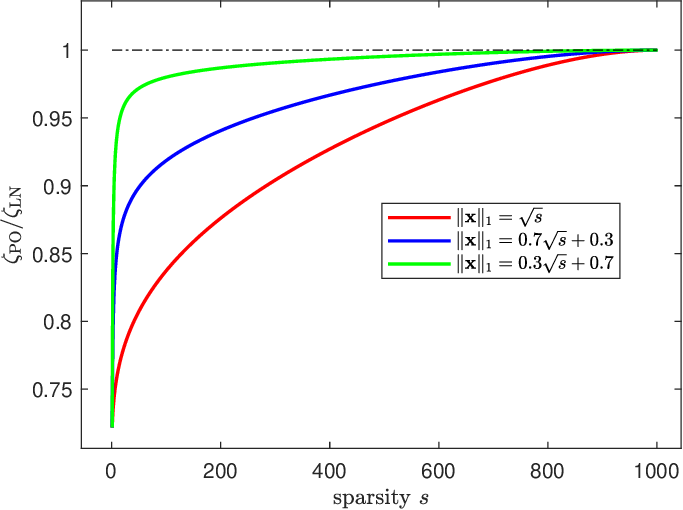} 
        \par
    \end{centering}
    
    \caption{We fix 
    $n=1000$ and plot the (approximate) curves of $\zeta_{\rm PO}/\zeta_{\rm LN}$ v.s. $s=1:1000$ under $\|\bx\|_1=\sqrt{s}$, $0.7\sqrt{s}+0.3,~0.3\sqrt{s}+0.7$. These curves are monotonically increasing. %In the right figure, we fix $s=1,~5,~20$ and plot the curves of $\zeta_{\rm PO}/\zeta_{\rm LN}$ v.s. $n=100:50:10^5$, which are monotonically decreasing. 
    \label{fig1}}
\end{figure}

As we will clarify later, there is another interesting and important difference between $\zeta_{\rm PO}$ and $\zeta_{\rm LN}$ in terms of the dependence on the signal. For example, while $\zeta_{\rm LN}(\bx;\ell_1)$ only depends on $\bx$ through the dimension $n$ and the sparsity $s$, $\zeta_{\rm PO}(\bx;\ell_1)$ also depends on $\|\bx\|_1$. Indeed, under fixed $n$ and $s\ge 2$, $\zeta_{\rm PO}(\bx;\ell_1)$ is a monotonically decreasing function in $\|\bx\|_1 \in (1,\sqrt{s}]$. Therefore, $s$-sparse signals with non-zero entries being $\pm s^{-1/2}$, which is referred to as the  {\it equal amplitude}  signal henceforth, renders the earliest phase transition in PO-CS. This is in stark contrast to a fact in linear compressed sensing that equal amplitude signal is somewhat the least favorable one in terms of the noise sensitivity \cite{donoho2011noise}.

%And  
 %as we shall see, the substitution of $\bg$ in (\ref{zetaln}) with $(\bI_n-\bx\bx^\top)\bg$ in (\ref{zetapo}) only induces relatively minor difference, while the replacement of $\bu\in \mathbb{S}^{n-1}$ (implicitly from $\bu\in\calT_f^*(\bx)$ in (\ref{zetaln})) by $\|\bQ_{\bx}\bu\|_2=1$  in (\ref{zetapo}) appears more notable and often creates a gap of distinct multiplicative factors, informally: $$\zeta_{\rm LN}(\bx;f)-\zeta_{\rm PO}(\bx;f)\gtrsim \zeta_{\rm LN}(\bx;f).$$ 
 %Another interesting finding is that $\zeta_{\rm PO}(\bx;f)$ typically exhibits more dependence on the underlying signal then $\zeta_{\rm LN}(\bx;f)$.

\subsection{Notation}
 Throughout this paper, $f$ is a properly chosen norm in $\mathbb{R}^n$ (hence $f(\lambda \bu)=|\lambda|f(\bu)$), $\bx\in \mathbb{S}^{n-1}$ is a fixed underlying signal. The descent cone of $f$  at $\bw$ is given by $\calT_f(\bw) = \{\bu \in\mathbb{R}^n:\exists t>0,~f(\bw+t\bu)\le f(\bw)\}$, and the subdifferential of $f$ at $\bw$ is defined as $  \partial f(\bw) =\{\bu\in\mathbb{R}^n:f(\by)\ge f(\bw)+\langle \bu, \by-\bw\rangle,~\forall \by\in \mathbb{R}^n\}.$ For $\ba\in \mathbb{R}^n$, we work with the $\ell_2$-norm $\|\ba\|_2$, and $\ell_1$-norm $\|\ba\|_1$. For matrix $\bA \in\mathbb{R}^{p\times q}$, we let $\|\bA\|_{op},~\|\bA\|_{nu},~\|\bA\|_{F}$ denote the operator norm (the largest singular value),   the nuclear norm, and the Frobenius norm, respectively.   Given $\calW\subset \mathbb{R}^n$, we define its radius as $\rad(\calW)= \sup\{\|\bw\|_2:\bw\in\calW\}$ and the distance from $\ba$ to $\calW$ as $\dist(\ba,\calW)=\inf_{\bw\in\calW}\|\ba-\bw\|_2$. We write the inner product of two matrices (or vectors if $q= 1$) $\bA,\bB\in \mathbb{R}^{p\times q}$ as $\langle\bA,\bB\rangle = {\rm Tr}(\bA^\top \bB)$.   The dual of a cone $\calC$ is defined as $\calC^\circ=\{\bu\in \mathbb{R}^n:\langle\bu,\bw\rangle\le 0,~~\forall \bw\in\calC\}$. The statistical dimension of a closed convex cone $\calC$    is denoted by $\delta(\calC)$ \cite[Prop. 3.1]{amelunxen2014living} and is  closely related to the Gaussian width: for a closed convex cone $\calC$ we have $\omega^2(\calC^*)\le \delta(\calC)\le \omega^2(\calC^*)+1$ \cite[Prop. 10.2]{amelunxen2014living} where we   use the convention $\calC^*=\calC\cap \mathbb{S}^{n-1}$. For two random variables, we write $A\stackrel{d}{=} B$ to state that $A$ and $B$ have the same distribution. We also use $\calN^{p\times q}(0,1)$ to denote a $p\times q$ random matrix with i.i.d. $\calN(0,1)$ entries. Throughout the paper, positive constants are denoted with $C,C_i,c,c_i$ whose values may vary from line to line. We   write $T_1\lesssim T_2$ ($T_1\gtrsim T_2$, resp.) if there exists some positive constant $C$  such that $T_1\le CT_2$ ($T_1\ge CT_2$, resp.).

\subsection{Structure of the Paper}
We present our main result which pinpoints the phase transition location of PO-CS in Section \ref{sec2}. In Section \ref{sec3} we calculate $\zeta_{\rm PO}(\bx;f)$ in sparse signal and low-rank matrix recovery, respectively and approximately plot the ratios $\zeta_{\rm PO}/\zeta_{\rm LN}$. We provide numerical results in Section \ref{sec4} and conclude the paper in Section \ref{sec5}.  The complete technical proofs are deferred to the appendix.

\section{Phase Transition}\label{sec2}
Recall that we consider the following recovery procedure: we solve $\hat{\bx}$ from (\ref{1.2}) with  $\bA_{\bz}$ given by (\ref{Azphi}) and then normalize it to obtain $\bx^\sharp={\hat{\bx}}/\|\hat{\bx}\|_2$ as our final estimate. First, we need to find the sufficient conditions that account for the success and failure of this procedure. Fortunately, despite the additional normalization in Euclidean norm, we still have the following statement similar to the ones in \cite{chandrasekaran2012convex,amelunxen2014living,sun2022phase,oymak2018universality} for linear sensing. 
\begin{lem}[Conditions for success and failure]\label{lem:suffcon}
    For a fixed $\bx\in\mathbb{S}^{n-1}$, suppose that $\|\bPhi\bx\|_1>0$. Let $f(\cdot)$ denote a norm in $\mathbb{R}^n$, and assume that $\calT_f(\bx)$ is closed and not a subspace. Then the following two statements are correct: 
    \begin{itemize}
        \item We have $\bx^\sharp = \bx$ if  
        \begin{align}\label{succ_con}
            \min_{\bu\in \calT_f^*(\bx)}\max_{\bv\in\mathbb{S}^m}~\bv^\top \bA_{\bz}\bu>0. 
        \end{align}
  
        \item We have $\bx^\sharp\ne \bx$ if  
        \begin{align}\label{fail_con}
            \min_{\bv\in \mathbb{S}^m}\max_{\bu\in\calT_f^*(\bx)}~\bv^\top\bA_{\bz}\bu>0. 
        \end{align} 
    \end{itemize}
\end{lem}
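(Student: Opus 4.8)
The plan is to reduce both statements to the classical descent-cone criterion for norm minimization, after accounting for the two features that separate \eqref{1.2} from ordinary basis pursuit: the constraint reads $\bA_{\bz}\bu=\be_1$ rather than $\bA_{\bz}\bu=\bA_{\bz}\bx$, and the output is the renormalization $\bx^\sharp=\hat\bx/\|\hat\bx\|_2$. First I would record the elementary algebra of $\bA_{\bz}$. Since $\bz=\sign(\bPhi\bx)$ and $\sign(0)=1$, we have $\bz^*\bPhi\bx=\|\bPhi\bx\|_1$, which is $>0$ by hypothesis, so the rescaled signal $\bu_0:=\tfrac{m}{\|\bPhi\bx\|_1}\,\bx$ is feasible for \eqref{1.2}: its lower block $\tfrac{1}{\sqrt m}\Im(\diag(\bz^*)\bPhi)\bu_0$ vanishes by \eqref{part1mea}, and its top block equals $\tfrac{1}{\|\bPhi\bx\|_1}\Re(\bz^*\bPhi\bx)=1$, hence $\bA_{\bz}\bu_0=\be_1$. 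Moreover $\bu_0$ is the \emph{only} feasible vector of the form $t\bx$ with $t>0$, because the top-block constraint pins down the scale. Since $f$ is a norm it is coercive, so \eqref{1.2} attains its minimum at some $\hat\bx$; and by positive homogeneity of $f$ the descent cone and subdifferential are invariant along the ray $\mathbb{R}_{>0}\bx$, so $\calT_f(\bu_0)=\calT_f(\bx)$ and $\partial f(\bu_0)=\partial f(\bx)$. The key reduction is then: $\bx^\sharp=\bx$ if and only if $\hat\bx=\bu_0$, and in particular if $\bu_0$ fails to be a minimizer of \eqref{1.2} then $\bx^\sharp\ne\bx$ for every choice of minimizer.

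For the success part, $\max_{\bv\in\mathbb{S}^m}\bv^\top\bA_{\bz}\bu=\|\bA_{\bz}\bu\|_2$, so \eqref{succ_con} forces $\calT_f(\bx)\cap\ker(\bA_{\bz})=\{\bzero\}$ (here one uses that $\calT_f(\bx)$ is closed, so $\calT_f^*(\bx)$ is compact and the minimum is attained). Given any minimizer $\hat\bx$, the vector $\bd:=\hat\bx-\bu_0$ lies in $\ker(\bA_{\bz})$ (both are mapped to $\be_1$) and satisfies $f(\bu_0+\bd)=f(\hat\bx)\le f(\bu_0)$, so $\bd\in\calT_f(\bu_0)=\calT_f(\bx)$ (take $t=1$ in the definition of the descent cone). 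Hence $\bd=\bzero$, i.e.\ $\hat\bx=\bu_0$ and $\bx^\sharp=\bx$; this also shows the minimizer is unique.

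For the failure part I would first rewrite \eqref{fail_con}. For each fixed $\bv$, the quantity $\max_{\bu\in\calT_f^*(\bx)}\bv^\top\bA_{\bz}\bu=\max_{\bu\in\calT_f^*(\bx)}\langle\bA_{\bz}^\top\bv,\bu\rangle$ is strictly positive precisely when $\bA_{\bz}^\top\bv\notin\calT_f(\bx)^\circ$; since $\bv\mapsto\max_{\bu\in\calT_f^*(\bx)}\bv^\top\bA_{\bz}\bu$ is continuous and $\mathbb{S}^m$ is compact, \eqref{fail_con} forces $\operatorname{range}(\bA_{\bz}^\top)\cap\calT_f(\bx)^\circ=\{\bzero\}$. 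Now one checks $\partial f(\bx)\subseteq\calT_f(\bx)^\circ$ (if $\bg\in\partial f(\bx)$ and $\bd\in\calT_f(\bx)$ with $f(\bx+t\bd)\le f(\bx)$ for some $t>0$, then $f(\bx)+t\langle\bg,\bd\rangle\le f(\bx+t\bd)\le f(\bx)$, so $\langle\bg,\bd\rangle\le0$), while $\bzero\notin\partial f(\bx)$ because $f(\bzero)=0<f(\bx)$; combining these gives $\operatorname{range}(\bA_{\bz}^\top)\cap\partial f(\bx)=\emptyset$. By the KKT conditions for the affinely constrained convex program \eqref{1.2} — which are necessary and sufficient for optimality — this says precisely that $\bu_0$ is \emph{not} a minimizer of \eqref{1.2}, and hence, by the reduction above, $\bx^\sharp\ne\bx$.

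The algebra of $\bA_{\bz}$ and $\bu_0$ and the descent-cone step for the success direction are routine. The one genuine difficulty, I expect, is the failure direction: translating the $\min_\bv\max_\bu$ form of \eqref{fail_con} into the polar/subdifferential statement $\operatorname{range}(\bA_{\bz}^\top)\cap\partial f(\bx)=\emptyset$, and then being careful that what this delivers is ``$\bu_0$ is not a minimizer'' rather than merely ``$\bu_0$ is not the unique minimizer'' — the former is exactly what the normalization-robust conclusion $\bx^\sharp\ne\bx$ needs, and it relies on $\bu_0$ being the unique feasible representative of the ray $\mathbb{R}_{>0}\bx$. The closedness of $\calT_f(\bx)$ enters only to make $\calT_f^*(\bx)$ compact in the success direction, and the hypothesis that $\calT_f(\bx)$ is not a subspace rules out the degenerate situation in which $f$ is locally flat around $\bx$ and $\bu_0$ would remain optimal regardless of \eqref{fail_con}.
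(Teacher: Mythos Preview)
Your proof is correct. The success direction is essentially identical to the paper's: both rewrite $\max_{\bv}\bv^\top\bA_{\bz}\bu=\|\bA_{\bz}\bu\|_2$, deduce $\calT_f(\bx)\cap\ker(\bA_{\bz})=\{0\}$, and argue that any minimizer $\hat\bx$ must then coincide with the rescaled signal $\bu_0$ (the paper's $\bx^\star$).

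The failure direction differs. Both arguments begin by converting \eqref{fail_con} into the polar statement $\operatorname{range}(\bA_{\bz}^\top)\cap\calT_f(\bx)^\circ=\{0\}$. From there the paper invokes an external result (\cite[Prop.~3.8]{oymak2018universality}, which uses the closed-not-a-subspace hypothesis) to extract a nonzero $\bh\in\calT_f^*(\bx)\cap\ker(\bA_{\bz})$ and then exhibits $\bx^\star+t_0\bh$ as a feasible competitor with no worse objective. You instead pass through the containment $\partial f(\bx)\subseteq\calT_f(\bx)^\circ\setminus\{0\}$ to obtain $\operatorname{range}(\bA_{\bz}^\top)\cap\partial f(\bx)=\emptyset$, and then read off directly from the KKT conditions that $\bu_0$ is \emph{not} a minimizer. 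Your route is more self-contained (no external citation) and delivers a slightly sharper intermediate conclusion: you get ``$\bu_0$ is not optimal'' outright, whereas the paper's competitor construction, as written, only gives a feasible point with $f$-value $\le f(\bx^\star)$ and would need an additional remark to rule out a tie. Both then finish the same way, using that $\bu_0$ is the unique feasible point on the ray $\mathbb{R}_{>0}\bx$ to force $\bx^\sharp\ne\bx$.
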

 
  See Appendix \ref{provelem1} for the proof of Lemma \ref{lem:suffcon}.   We aim to use the conditions in Lemma \ref{lem:suffcon} to establish the phase transition threshold of (\ref{1.2}). Towards this goal, we use the (convex) Gaussian mix-max theorem (see, e.g., \cite{thrampoulidis2015regularized,thrampoulidis2018precise}), particularly Lemma \ref{gmt} below that follows from  the more general statement in \cite[Prop. 1]{chandrasekher2023sharp}. 
  \begin{lem}\label{gmt}
    We let $\bG\in \mathbb{R}^{m\times (n-1)}$, $\bg\in \mathbb{R}^m$, $\bh\in \mathbb{R}^{n-1}$, $\calS_{\bw}\subset \mathbb{R}^n$, $\calS_{\bu}\subset \mathbb{R}^m$, $\psi:\mathbb{R}^n\times \mathbb{R}^m\to\mathbb{R}$, $\bw=(w_1,\tilde{\bw}^\top)^\top \in \calS_{\bw}$ with $\tilde{\bw}\in \mathbb{R}^{n-1}$, and define 
    \begin{gather*}
        \Gamma(\bG) := \min_{\bw\in\calS_{\bw}}\max_{\bu\in\calS_{\bu}}~\bu^\top\bG\tilde{\bw}+ \psi(\bw,\bu),\\
        \Upsilon(\bg,\bh) := \min_{\bw\in\calS_{\bw}}\max_{\bu\in\calS_{\bu}}~\|\tilde{\bw}\|_2\bg^\top\bu+\|\bu\|_2\bh^\top\tilde{\bw}+\psi(\bw,\bu).
    \end{gather*}
    Assume that $\calS_{\bw}$ and $\calS_{\bu}$ are compact, $\psi$ is continuous on $\calS_{\bw}\times\calS_{\bu}$, and $\bG,\bg,\bh$ have i.i.d. standard normal entries. Then for any $c\in\mathbb{R}$ we have $\mathbbm{P}(\Gamma(\bG)<c)\le 2 \mathbbm{P}(\Upsilon(\bg,\bh)\le c)$, or equivalently $\mathbbm{P}(\Gamma(\bG)\ge c)\ge 2 \mathbbm{P}(\Upsilon(\bg,\bh)>c)-1$
\end{lem}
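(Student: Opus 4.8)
The statement is a version of the Convex Gaussian Min--Max Theorem in which the Gaussian bilinear term $\bu^\top\bG\tilde\bw$ couples only to the sub-vector $\tilde\bw$ of the decision variable $\bw=(w_1,\tilde\bw^\top)^\top$, while the deterministic perturbation $\psi$ and the feasible set $\calS_\bw$ may depend on all of $\bw$. Since $\psi$ is deterministic and the coordinate $w_1$ never enters the random part, I expect the classical Gordon comparison argument to apply verbatim with $\tilde\bw$ playing the role that the full decision variable plays in the textbook CGMT; this is exactly the generality already present in \cite[Prop.~1]{chandrasekher2023sharp}, so the plan is to deduce Lemma~\ref{gmt} by matching notation with that result. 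For completeness I outline the underlying two-step argument below.

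First I would augment the primary problem with a standard normal scalar $\gamma$ independent of $\bG$, defining
\[
\widetilde\Gamma(\bG,\gamma):=\min_{\bw\in\calS_\bw}\ \max_{\bu\in\calS_\bu}\ \bu^\top\bG\tilde\bw+\|\bu\|_2\|\tilde\bw\|_2\,\gamma+\psi(\bw,\bu).
\]
On the event $\{\gamma\le 0\}$ the extra term is nonpositive and may be dropped inside the inner maximum, so $\widetilde\Gamma\le\Gamma$ pointwise there; since $\gamma\indep\bG$ and $\mathbb{P}(\gamma\le 0)=1/2$, this gives $\mathbb{P}(\Gamma(\bG)<c)\le 2\,\mathbb{P}(\widetilde\Gamma(\bG,\gamma)<c)$. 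Here compactness of $\calS_\bw,\calS_\bu$ and continuity of $\psi$ ensure all the optima are attained and the relevant maps are continuous.

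Next I would invoke Gordon's Gaussian comparison inequality (see \cite{gordon1985some,gordon1988milman}) for the centered Gaussian processes indexed by $(\bu,\bw)\in\calS_\bu\times\calS_\bw$,
\[
X_{\bu,\bw}:=\bu^\top\bG\tilde\bw+\|\bu\|_2\|\tilde\bw\|_2\,\gamma,\qquad Y_{\bu,\bw}:=\|\tilde\bw\|_2\,\bg^\top\bu+\|\bu\|_2\,\bh^\top\tilde\bw,
\]
both of which depend on $\bw$ only through $\tilde\bw$. One checks $\mathbb{E}[X_{\bu,\bw}^2]=\mathbb{E}[Y_{\bu,\bw}^2]=2\|\bu\|_2^2\|\tilde\bw\|_2^2$ and, writing $a=\bu^\top\bu'$, $A=\|\bu\|_2\|\bu'\|_2$, $b=\tilde\bw^\top\tilde\bw'$, $B=\|\tilde\bw\|_2\|\tilde\bw'\|_2$,
\[
\mathbb{E}[X_{\bu,\bw}X_{\bu',\bw'}]-\mathbb{E}[Y_{\bu,\bw}Y_{\bu',\bw'}]=(a-A)(b-B)\ \ge\ 0,
\]
which is nonnegative by Cauchy--Schwarz and vanishes whenever $\tilde\bw=\tilde\bw'$ (in particular whenever $\bw=\bw'$). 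These are precisely Gordon's hypotheses with ``rows'' indexed by $\bw$ and ``columns'' by $\bu$; absorbing the deterministic $\psi(\bw,\bu)$ into the thresholds of the events $\bigcap_\bw\bigcup_\bu\{X_{\bu,\bw}\ge c-\psi(\bw,\bu)\}$ then yields $\mathbb{P}(\widetilde\Gamma<c)\le\mathbb{P}(\Upsilon(\bg,\bh)<c)\le\mathbb{P}(\Upsilon(\bg,\bh)\le c)$. Chaining with the previous step gives the claimed bound, and the equivalent complementary form follows by taking complements.

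The one technical point that needs care is that Gordon's comparison inequality is classically stated for finite index sets, whereas $\calS_\bw,\calS_\bu$ are merely compact: this is handled by the standard device of replacing $\calS_\bw$ and $\calS_\bu$ by finite $\varepsilon$-nets, applying the finite inequality, and passing to the limit $\varepsilon\downarrow 0$ using compactness together with the uniform continuity of $\psi$ and of the bilinear and linear forms on the compact domain. I do not anticipate any obstacle beyond this routine limiting argument, since the partial-coupling structure enters only through the elementary covariance identity displayed above; this is also why the lemma can simply be quoted from \cite[Prop.~1]{chandrasekher2023sharp}.
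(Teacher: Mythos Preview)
Your proposal is correct and in fact goes beyond what the paper does: the paper does not prove Lemma~\ref{gmt} at all but simply records that it ``follows from the more general statement in \cite[Prop.~1]{chandrasekher2023sharp},'' which is exactly the reference you also identify. Your additional self-contained sketch (augmenting with an independent scalar Gaussian $\gamma$, verifying the covariance identity $(a-A)(b-B)\ge 0$ with equality when $\tilde\bw=\tilde\bw'$, invoking Gordon's comparison with rows indexed by $\bw$ and thresholds shifted by $\psi$, and passing from finite nets to compact sets by continuity) is the standard CGMT derivation and is sound.
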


 Note that Lemma \ref{gmt} does not directly apply to 
(\ref{succ_con}) or (\ref{fail_con}), as it requires the matrix to have i.i.d. $\mathcal{N}(0,1)$ entries, which is not the case for $\bA_{\bz}$. Our next Lemma aims to show that $\bA_{\bz}$ is ``close'' to an i.i.d. Gaussian matrix in some sense. 
Throughout this work, $\bP_{\bx}\in \mathbb{R}^{n\times n}$ denotes an orthogonal matrix whose first row is $\bx^\top$. Note that $\bP_{\bx}\bx = \be_1$. %To allow for subsequent use of Gaussian min-max theorem, the next lemma shows that  $\bA_{\bz}$ is ``close'' to standard Gaussian matrix when transformed by $\bP_{\bx}^\top$. 
\begin{lem}
    [Nearly-Gaussianity of $\bA_{\bz}$] \label{lem2}Let $L=\frac{1}{m}\sum_{i=1}^mL_i$ where $\{L_i\}_{i=1}^m\stackrel{iid}{\sim}|\calN(0,1)+\calN(0,1)\bi|$, and let $\bG\sim \calN^{(m+1)\times (n-1)}(0,1)$ be independent of $L$. Then we have 
    \begin{align}
        \bA_{\bz}\bP_\bx^\top \stackrel{d}{=} \begin{bmatrix}
            L\be_1 & \frac{\bG}{\sqrt{m}}
        \end{bmatrix}.
    \end{align}
\end{lem}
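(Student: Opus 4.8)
The plan is to exploit the rotational invariance of the complex Gaussian matrix $\bPhi$ twice: once through the real orthogonal matrix $\bP_{\bx}$ that rotates $\bx$ onto $\be_1$, and once through the unit-modulus entries of $\bz$. First I would write $\bP_{\bx}^\top = [\,\bx \mid \bX_\perp\,]$ with $\bX_\perp\in\mathbb{R}^{n\times(n-1)}$ having orthonormal columns spanning $\bx^\perp$, and push $\bP_{\bx}^\top$ through the real operators $\Re(\cdot)$ and $\Im(\cdot)$ to obtain
\begin{align*}
\bA_{\bz}\bP_{\bx}^\top = \begin{bmatrix}\frac{1}{m}\Re\!\big(\bz^*\,\bPhi\bP_{\bx}^\top\big)\\ \frac{1}{\sqrt m}\Im\!\big(\diag(\bz^*)\,\bPhi\bP_{\bx}^\top\big)\end{bmatrix},\qquad \bPhi\bP_{\bx}^\top = [\,\bw \mid \bV\,],
\end{align*}
where $\bw := \bPhi\bx$ and $\bV := \bPhi\bX_\perp$. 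Since the columns of $\bP_{\bx}^\top$ are orthonormal and $\bPhi$ has i.i.d.\ entries $\sim\calN(0,1)+\calN(0,1)\bi$, the vector $\bw$ and the matrix $\bV$ are independent, $\bw$ has i.i.d.\ entries $\sim\calN(0,1)+\calN(0,1)\bi$, and $\bV$ has i.i.d.\ entries of the same law.

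Next I would dispatch the first column. Because $z_i = w_i/|w_i|$, we have $\bar z_i w_i = |w_i|\ge 0$, so $\bz^*\bw = \sum_{i=1}^m|w_i|$ is real and every entry of $\diag(\bz^*)\bw$ is real. Hence the top-left entry of $\bA_{\bz}\bP_{\bx}^\top$ is $\frac1m\sum_i|w_i| = L$, the bottom-left $m\times 1$ block vanishes, and the first column equals $L\be_1$ with $L = \frac1m\sum_i L_i$, $L_i = |w_i|\stackrel{iid}{\sim}|\calN(0,1)+\calN(0,1)\bi|$, exactly as claimed.

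For the remaining $n-1$ columns I would condition on $\bw$: then $\bz$ is a fixed vector of phases, and since $\bV\indep\bw$ the conditional law of $\bV$ is still i.i.d.\ complex Gaussian. Set $\bU := \diag(\bz^*)\bV$; each of its entries is an independent standard complex Gaussian rotated by a deterministic phase, hence again standard complex Gaussian — and, crucially, with conditional law not depending on $\bw$, so $\bU$ is independent of $\bw$, hence of $L$. Writing $\bU = \bU_R + \bU_I\bi$ with $\bU_R\indep\bU_I$ independent i.i.d.\ $\calN(0,1)$ matrices, the bottom block is $\frac1{\sqrt m}\Im(\bU)=\frac1{\sqrt m}\bU_I$, while (using $\bz^*\bV = \bone_m^\top\bU$) the top row is $\frac1m\Re(\bz^*\bV) = \frac1m\bone_m^\top\bU_R$, a row whose $n-1$ entries are i.i.d.\ $\calN(0,\tfrac1m)$. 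Since $\bU_R\indep\bU_I$, stacking them gives a matrix distributed as $\frac1{\sqrt m}\bG$ with $\bG\sim\calN^{(m+1)\times(n-1)}(0,1)$, and this is independent of $L$. Assembling the first column with this block yields the stated identity in distribution.

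I expect the only delicate point to be the conditioning step: one must argue not merely that $\diag(\bz^*)\bV\stackrel{d}{=}\bV$, but that its \emph{conditional} law given $\bw$ is free of $\bw$, which is precisely what upgrades "$\bU\stackrel{d}{=}\bV$" to the genuine independence of the Gaussian block from $L$; relatedly, one uses that $\Re$ and $\Im$ of a complex Gaussian are independent so that the top row and the bottom block do not correlate. Everything else — pushing $\bP_{\bx}^\top$ through $\Re/\Im$, the telescoping identity $\bz^*\bw = \sum_i|w_i|$, and the variance of a normalized row-sum of i.i.d.\ Gaussians — is routine bookkeeping.
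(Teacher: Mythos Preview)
Your proposal is correct and follows essentially the same route as the paper: rotate $\bPhi$ by $\bP_{\bx}^\top$ to isolate the column $\bw=\bPhi\bx$ from the orthogonal part $\bV$, use $\bar z_i w_i=|w_i|$ to identify the first column as $L\be_1$, and then apply the phase rotation $\diag(\bz^*)$ to $\bV$ via unitary invariance of the complex Gaussian to obtain the $\bG/\sqrt m$ block. Your conditioning argument for the independence of $\bU=\diag(\bz^*)\bV$ from $\bw$ is in fact spelled out more carefully than in the paper, which simply asserts that $L'$ and $\bA'$ are independent; otherwise the two proofs are the same up to notation.
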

See Appendix \ref{provelem2} for the detailed proof.  Using this lemma, we are ready to state our main theorem and give a sketch of the proof.    
%Our main technical tool is the following Gaussian min-max theorem (see, e.g., \cite{thrampoulidis2015regularized,thrampoulidis2018precise}) that follows from the more general statement in \cite[Prop. 1]{chandrasekher2023sharp}. 
 
\begin{theorem}\label{mainthm}
     Suppose that  the entries of $\bPhi$ are i.i.d. $\calN(0,1)+\calN(0,1)\bi$, and consider the recovery of a fixed signal $\bx\in \mathbb{S}^{n-1}$ from $\bz=\sign(\bPhi\bx)$  by first solving $\hat{\bx}$ from (\ref{1.2}) and then obtaining $\bx^\sharp = \hat{\bx}/\|\hat{\bx}\|_2$ as the final estimate. There  exists an absolute constant $c$, such that for any $t\in[\frac{17}{m},c]$, we have:
    \begin{itemize}
        \item If $m\ge (1+t)\cdot \zeta_{\rm PO}(\bx;f)$, then $\mathbbm{P}(\bx^\sharp=\bx) \ge 1-14\exp(-\frac{mt^2}{289})$;
        \item If $m\le (1-t)\cdot\zeta_{\rm PO}(\bx;f)$, then $\mathbbm{P}(\bx^\sharp\ne \bx) \ge 1-14\exp(-\frac{mt^2}{289})$,
    \end{itemize}
    where $\zeta_{\rm PO}(\bx;f)$ is the quantity defined in (\ref{zetapo}).
\end{theorem}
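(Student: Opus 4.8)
The plan is to combine the deterministic success/failure criteria of Lemma~\ref{lem:suffcon} with the distributional identity of Lemma~\ref{lem2} and then apply the Gaussian min-max theorem (Lemma~\ref{gmt}) to each of the two min-max quantities \eqref{succ_con} and \eqref{fail_con}. First I would use the orthogonal invariance: since $\bP_{\bx}$ is orthogonal, $\bu\mapsto\bP_{\bx}\bu$ maps $\calT_f^*(\bx)$ bijectively onto $\bP_{\bx}\calT_f^*(\bx)=:\mathcal{S}_{\bw}^*$, a subset of $\mathbb{S}^{n-1}$, so $\min_{\bu\in\calT_f^*(\bx)}\max_{\bv}\bv^\top\bA_{\bz}\bu=\min_{\bw\in\mathcal{S}_{\bw}^*}\max_{\bv}\bv^\top(\bA_{\bz}\bP_{\bx}^\top)\bw$. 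By Lemma~\ref{lem2} the matrix $\bA_{\bz}\bP_{\bx}^\top$ is distributed as $[L\be_1\ \ \bG/\sqrt{m}]$ with $\bG$ Gaussian and $L$ an independent average of $m$ i.i.d.\ Rayleigh-type variables concentrating sharply around its mean $\mu:=\mathbb{E}|\calN(0,1)+\calN(0,1)\bi|=\sqrt{\pi/2}/\sqrt{2}$ (I would pin down the exact constant carefully — the factor $\sqrt{\pi/2}$ is what produces $\bQ_{\bx}$ and the $(1-\sqrt{2/\pi})$ in $\zeta_{\rm PO}$). Writing $\bw=(w_1,\tilde{\bw}^\top)^\top$, the bilinear form becomes $v_1 L w_1 + \bv^\top(\bG/\sqrt m)\tilde{\bw}$, which is exactly the structure $\bu^\top\bG\tilde{\bw}+\psi(\bw,\bv)$ required by Lemma~\ref{gmt}, with $\psi(\bw,\bv)=v_1 L w_1$ (treating the realized value of $L$ as a constant and conditioning on it).

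Next I would carry out the Gaussian min-max reduction. Applying Lemma~\ref{gmt} to the success quantity \eqref{succ_con}, the auxiliary optimization $\Upsilon(\bg,\bh)$ takes the form $\min_{\bw\in\mathcal{S}_{\bw}^*}\max_{\bv\in\mathbb{S}^m}\big[\|\tilde{\bw}\|_2\,\bg^\top\bv/\sqrt m + \|\bv\|_2\,\bh^\top\tilde{\bw}/\sqrt m + L w_1 v_1\big]$. The inner maximization over $\bv\in\mathbb{S}^m$ of a linear functional is explicit: $\max_{\bv\in\mathbb{S}^m}\bv^\top\ba=\|\ba\|_2$ where here $\ba = (\|\tilde{\bw}\|_2/\sqrt m)\bg + L w_1 \be_1$; since $\|\bv\|_2=1$ on the sphere the middle term is just $\bh^\top\tilde{\bw}/\sqrt m$. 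So $\Upsilon$ reduces to a purely $\bw$-indexed problem $\min_{\bw}\big[\sqrt{\tfrac{\|\tilde{\bw}\|_2^2}{m}\|\bg\|_2^2 + \tfrac{2\|\tilde{\bw}\|_2 L w_1}{\sqrt m}g_1 + L^2 w_1^2}\;+\;\tfrac{1}{\sqrt m}\bh^\top\tilde{\bw}\big]$ — wait, one must be careful: $\max_{\bv\in\mathbb{S}^m}$ of $\bv^\top\ba+\mathrm{const}$ is $\|\ba\|_2+\mathrm{const}$ only because the constant does not involve $\bv$; here the middle term $\|\bv\|_2\bh^\top\tilde{\bw}/\sqrt m=\bh^\top\tilde{\bw}/\sqrt m$ is genuinely constant on the sphere, so this is fine. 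Using $\|\bg\|_2^2\approx m$ and $g_1/\sqrt m\to 0$, the square-root term concentrates around $\sqrt{\|\tilde{\bw}\|_2^2 + L^2 w_1^2}\approx\sqrt{\|\tilde{\bw}\|_2^2 + \mu^2 w_1^2}=\|\bQ_{\bx}\bP_{\bx}^\top... $ — more precisely, since $\mu=\sqrt{\pi/2}\cdot\tfrac1{\sqrt2}$ I would track the scaling so that $\sqrt{\|\tilde\bw\|_2^2+\mu^2 w_1^2}$ matches $\|\bQ_{\bx}\bu\|_2$ after rescaling; I expect $\mathbb{E}|\calN(0,1)+\calN(0,1)\bi|$ computed over the given (variance-1 real and imaginary parts) distribution to give exactly $\sqrt{\pi/2}$, consistent with $\bQ_{\bx}=\bI_n+(\sqrt{\pi/2}-1)\bx\bx^\top$. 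The remaining term $\bh^\top\tilde{\bw}/\sqrt m$, after optimizing, produces $-\tfrac{1}{\sqrt m}\mathbb{E}\sup_{\bu\in\calT_f(\bx),\|\bQ_{\bx}\bu\|_2=1}\langle(\bI-\bx\bx^\top)\bg,\bu\rangle = -\sqrt{\zeta_{\rm PO}(\bx;f)/m}$, because $\bh$ restricted to the last $n-1$ coordinates corresponds to $(\bI-\bx\bx^\top)\bg$ in the original coordinates and the normalization $\|\bQ_{\bx}\bu\|_2=1$ is precisely what scales $\tilde{\bw}$. So $\Upsilon>0$ w.h.p.\ iff $\mu\cdot(\text{something})>\sqrt{\zeta_{\rm PO}/m}$, i.e.\ iff $m\gtrsim\zeta_{\rm PO}$; and by Lemma~\ref{gmt} this transfers to $\Gamma>0$, hence to $\bx^\sharp=\bx$. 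The failure direction \eqref{fail_con} is handled symmetrically by swapping the roles of the two min-max orders (the convex form of GMT / Gordon applies since the constraint sets are convex and the relevant quantity is bilinear).

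For the quantitative statement I would make every concentration step explicit with the stated constants: (i) $\|\bg\|_2^2$, $\|\bh'\|_2$-type norms concentrate with sub-exponential/sub-Gaussian tails; (ii) $L$ concentrates around $\mu$ with $\mathbb{P}(|L-\mu|>s)\le 2\exp(-cms^2)$ since it is an average of $m$ i.i.d.\ bounded-sub-Gaussian variables; (iii) $\sup_{\bu}\langle\bh',\bu\rangle$ concentrates around its mean $\sqrt{\zeta_{\rm PO}}$ by Gaussian concentration of Lipschitz functions (the supremum over a bounded set is $\mathrm{rad}$-Lipschitz in $\bh'$). Chaining these with the factor-$2$ loss from Lemma~\ref{gmt} and a union bound over the handful of events yields a bound of the form $1-14\exp(-mt^2/289)$ once the radius/Lipschitz constants are bounded (the constraint $t\ge 17/m$ absorbs lower-order $g_1/\sqrt m$ and $1/m$ corrections, and $t\le c$ keeps us in the regime where the linearizations are valid). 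The main obstacle, I expect, is not the GMT machinery itself but the bookkeeping that ties the auxiliary optimum exactly to $\zeta_{\rm PO}(\bx;f)$ as defined in \eqref{zetapo}: one must verify that the combination of (a) the $\be_1$-direction getting weight $\mu=\sqrt{\pi/2}$ (hence $\bQ_{\bx}$ appearing in the normalization), (b) the last $n-1$ coordinates of the Gaussian vector $\bh$ corresponding to $(\bI_n-\bx\bx^\top)\bg$ in original coordinates, and (c) the constraint $\|\bw\|_2=1$ on the sphere transforming into $\|\bQ_{\bx}\bu\|_2=1$, all fit together so that the auxiliary threshold is \emph{exactly} $\zeta_{\rm PO}$ and not merely proportional to it — and additionally making sure the non-subspace hypothesis on $\calT_f(\bx)$ and the compactification needed to invoke Lemma~\ref{gmt} (restricting to $\calT_f^*(\bx)=\calT_f(\bx)\cap\mathbb{S}^{n-1}$, which is compact) are handled cleanly, including the degenerate possibility $w_1=\langle\bx,\bu\rangle$ having the wrong sign.
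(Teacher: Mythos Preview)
Your plan is correct and follows the paper's proof almost step for step: Lemma~\ref{lem:suffcon} $\to$ rotate by $\bP_{\bx}$ and invoke Lemma~\ref{lem2} $\to$ condition on $L$ and apply Lemma~\ref{gmt} $\to$ resolve the inner optimization over $\bv\in\mathbb{S}^m$ explicitly as an $\ell_2$-norm $\to$ concentrate $\|\bg\|_2^2$, $L$, $g_1$, and the Gaussian--Lipschitz supremum $F(\bh)=\sup_{\bu}\bh^\top\tilde{\bu}/(\|\tilde{\bu}\|_2^2+\tfrac{\pi}{2}u_1^2)^{1/2}$ $\to$ verify algebraically that $(\mathbb{E}F(\bh))^2=\zeta_{\rm PO}(\bx;f)$. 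Two small corrections: (i) $\mathbb{E}L=\sqrt{\pi/2}$ exactly (the extra $1/\sqrt{2}$ you first write is spurious, as you yourself later suspect); (ii) for the failure direction the paper does \emph{not} appeal to the convex GMT---indeed $\calT_f^*(\bx)\subset\mathbb{S}^{n-1}$ is not convex---but instead applies the one-sided Lemma~\ref{gmt} directly to $\min_{\bv}\max_{\bu}$ and then uses the elementary inequality $\min_{\bv}\max_{\bu}\ge\max_{\bu}\min_{\bv}$ before solving $\min_{\bv\in\mathbb{S}^m}$ explicitly.
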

\begin{proof}[Proof Sketch] 
    We defer the complete proof to Appendix \ref{provethm1} and mention some important steps/ideas here. We only mention the proof for the condition of successful recovery. The arguments for the failure case are similar. By the first statement in Lemma \ref{lem:suffcon}, we have that  
    \begin{align}\label{eq:first_step}
    \mathbbm{P}(\bx^\sharp =\bx)\ge \mathbbm{P}\big(\min_{\bu\in\calT_f^*(\bx)}\max_{\bv\in\mathbb{S}^m}\bv^\top\bA_{\bz}\bu>0\big). 
    \end{align}
The minimax form we see inside the probability on the right hand side of \eqref{eq:first_step} is similar to the forms  Lemma \ref{gmt} can simplify. However, $\bA_{\bz}$ is not an i.i.d. Gaussian matrix. Hence, to make Lemma \ref{gmt} applicable, we use Lemma  \ref{lem2} to simplify $\bA_{\bz}$. According to Lemma \ref{lem2} we have  $\bv^\top \bA_{\bz}\bu = \bv^\top \bA_{\bz}\bP_{\bx}^\top \bP_{\bx}\bu \stackrel{d}{=} \bv^\top \big[L\be_1~\frac{\bG}{\sqrt{m}}\big](\bP_{\bx}\bu)$, where $\bG$ has i.i.d.~Gaussian entries. Hence, the idea is to apply Lemma \ref{gmt} to matrix $\bG$. Towards this goal, we first condition on $L = \frac{1}{m}\sum_{i=1}^m L_i$  (with $\{L_i\}_{i=1}^m\stackrel{iid}{\sim}|\calN(0,1)+\calN(0,1)\bi|$) and use Lemma \ref{gmt} regarding the randomness of $\bG$ to reach the lower bound 
    \begin{align}\label{lower1}
       & \mathbbm{P}(\bx^\sharp =\bx) \ge 2 \mathbbm{P}\Big(\forall \bu \in\bP_{\bx}\calT_f^*(\bx),~\bh^\top \tilde{\bu}\le \Big\|\|\tilde{\bu}\|_2\bg + \sqrt{m}Lu_1\be_1\Big\|_2\Big)-1 , 
    \end{align}
    where $\bg \sim\calN(0,\bI_{m+1})$ is independent of $L$, and $\tilde{\bu}$ is defined through $\bu=(u_1,\tilde{\bu}^\top)^\top$. To see more details on how \eqref{lower1} is derived, the reader can refer to Appendix \ref{provethm1}. Here we simply accept \eqref{lower1} and present the rest of the ideas.

    To obtain a lower bound for the right hand side of \eqref{lower1} , an important observation is that $\|\|\tilde{\bu}\|_2\bg+\sqrt{m}Lu_1\be_1\|_2$ tightly concentrates around $\big[m(\|\tilde{\bu}\|_2^2+\frac{\pi u_1^2}{2})\big]^{1/2}$; hence, according to (\ref{lower1}), $\mathbbm{P}(\bx^\sharp=\bx)$ is approximately bounded from below by
    $2\mathbbm{P}$ $\Big($ $\sup_{\bu\in \bP_{\bx}\calT^*_f(\bx)}$$ \big\{\bh^\top\tilde{\bu}/\big(\|\tilde{\bu}\|_2^2+\frac{\pi u_1^2}{2}\big)^{1/2}\big\}\le \sqrt{m}\Big)-1.$
    Moreover, $\sup_{\bu\in \bP_{\bx}\calT^*_f(\bx)} \big\{\bh^\top\tilde{\bu}/\big(\|\tilde{\bu}\|_2^2+\frac{\pi u_1^2}{2}\big)^{1/2}\big\}$ is a $1$-Lipschitz function of the Gaussian vector $\bh$ and hence tightly concentrates around its expectation (e.g., see \cite[Eq. (1.6)]{ledoux2013probability}) that is equal to $\sqrt{\zeta_{\rm PO}(\bx;f)}$. This can be shown by some algebra. Therefore, $\sup_{\bu\in \bP_{\bx}\calT^*_f(\bx)} \big\{\bh^\top\tilde{\bu}/\big(\|\tilde{\bu}\|_2^2+\frac{\pi u_1^2}{2}\big)^{1/2}\big\}\le \sqrt{m}$ occurs with high probability if $m$ exceeds $\zeta_{\rm PO}(\bx;f)$.  
 \end{proof}

An immediate observation is that the phase transition for PO-CS always occurs at lower values of \(m\) compared to linear compressed sensing. This can be understood by noting that  
\begin{align*}
        \mathbbm{E}\sup_{\substack{\bu\in \calT_f(\bx)\\ \|\bQ_{\bx}\bu\|_2= 1}}\big\langle (\bI_n-\bx\bx^\top)\bg,\bu\big\rangle \le \mathbbm{E}\sup_{\bu\in\calT_f^*(\bx)}\big\langle (\bI_n-\bx\bx^\top)\bg,\bu\big\rangle = \omega\big((\bI_n-\bx\bx^\top)\calT_f^*(\bx)\big) \le \omega(\calT_f^*(\bx)). 
    \end{align*} 
    This implies that  
    \begin{align*}
    \zeta_{\rm PO}(\bx;f)\le  \omega^2(\calT^*_f(\bx))\le \delta(\calT_f(\bx))=\zeta_{\rm LN}(\bx;f).
   \end{align*} 
     In Section \ref{sec3}, we will show for sparse/low-rank recovery that the ratio $\zeta_{\rm PO}(\bx;f)/\zeta_{\rm LN}(\bx;f)$ is typically bounded away from $1$ for signals with ``non-trivial'' structure. The complex form of \(\zeta_{\rm PO}(\bx; f)\) makes it challenging to compute directly or to perform accurate comparisons with linear compressed sensing. To address this issue, in our next proposition, we provide a simpler approximation of \(\zeta_{\rm PO}(\bx; f)\) based on the statistical dimension of a descent cone. Recall that
$\bQ_{\bx}=\bI_n+(\sqrt{\pi/2}-1)\bx\bx^\top$ and hence $\bQ_{\bx}^{-1}=
     \bI_n- (1-\sqrt{2/\pi})\bx\bx^\top$.

\begin{pro}\label{pro1}
    If we define the signal-dependent norm $f_{\bx}(\bw) = f(\bQ_{\bx}^{-1}\bw)$, then we  have 
     \begin{align*}
      \delta\big(\calT_{f_\bx}(\bx)\big)-\sqrt{\frac{8\delta(\calT_{f_{\bx}}(\bx))}{\pi}} -\Big(1 -\frac{2}{\pi}\Big)\le \zeta_{\rm PO}(\bx;f)\le \delta \big(\calT_{f_{\bx}}(\bx)\big).
     \end{align*}
\end{pro}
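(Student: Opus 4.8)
\emph{Approach.} The plan is to reduce the two-sided bound to a comparison of two functionals of the \emph{same} cone $\calC:=\calT_{f_\bx}(\bx)$, which differ only in whether the driving Gaussian vector has been projected onto $\bx^\perp$. The algebraic fact that makes this work is that $\bQ_\bx^{-1}$ acts trivially off the $\bx$-axis: since $\bx^\top(\bI_n-\bx\bx^\top)=\bzero$ we have $\bQ_\bx^{-1}(\bI_n-\bx\bx^\top)=\bI_n-\bx\bx^\top$, while $\bQ_\bx^{-1}\bx=\sqrt{2/\pi}\,\bx$. First I would check that $\calC=\bQ_\bx\calT_f(\bx)$: from $f_\bx(\bx+t\bu)=f\big(\sqrt{2/\pi}\,\bx+t\bQ_\bx^{-1}\bu\big)$ and $f_\bx(\bx)=\sqrt{2/\pi}\,f(\bx)$, positive homogeneity of $f$ shows that $f_\bx(\bx+t\bu)\le f_\bx(\bx)$ for some $t>0$ is equivalent to $\bQ_\bx^{-1}\bu\in\calT_f(\bx)$. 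Substituting $\bv=\bQ_\bx\bu$ in the definition (\ref{zetapo}) and using the two displayed identities, the constraint $\|\bQ_\bx\bu\|_2=1$ becomes $\|\bv\|_2=1$ and the objective is unchanged, so
\begin{align*}
  \zeta_{\rm PO}(\bx;f)=\Big[\mathbbm{E}\sup_{\bv\in\calC^*}\big\langle(\bI_n-\bx\bx^\top)\bg,\bv\big\rangle\Big]^2=\big[\mathbbm{E}\,h\big((\bI_n-\bx\bx^\top)\bg\big)\big]^2,
\end{align*}
where $\calC^*=\calC\cap\mathbb{S}^{n-1}$, $\bg\sim\calN(0,\bI_n)$, and $h(\bw):=\sup_{\bv\in\calC^*}\langle\bw,\bv\rangle$ is the support function of $\calC^*$, a sublinear (hence convex), $1$-Lipschitz function. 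On the other hand $\mathbbm{E}\,h(\bg)=\omega(\calC^*)$, and the inequality $\omega^2(\calC^*)\le\delta(\calC)\le\omega^2(\calC^*)+1$ recalled in the notation section lets me move between $\omega^2(\calC^*)$ and $\delta(\calC)$.

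\emph{The two estimates.} Now write $\bg=(\bI_n-\bx\bx^\top)\bg+g_1\bx$ with $g_1:=\langle\bx,\bg\rangle\sim\calN(0,1)$ independent of $(\bI_n-\bx\bx^\top)\bg$; note $-\bx\in\calC^*$ because the descent cone of a norm at $\bx$ contains $-\bx$, so $h\big((\bI_n-\bx\bx^\top)\bg\big)\ge\langle(\bI_n-\bx\bx^\top)\bg,-\bx\rangle=0$ almost surely and likewise $\omega(\calC^*)\ge0$. For the \emph{upper bound}, convexity of $h$ together with $g_1\stackrel{d}{=}-g_1$ gives $\mathbbm{E}_{g_1}h\big((\bI_n-\bx\bx^\top)\bg+g_1\bx\big)\ge h\big((\bI_n-\bx\bx^\top)\bg\big)$; taking the remaining expectation, $\sqrt{\zeta_{\rm PO}(\bx;f)}\le\mathbbm{E}\,h(\bg)=\omega(\calC^*)$, hence $\zeta_{\rm PO}(\bx;f)\le\omega^2(\calC^*)\le\delta(\calC)$. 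For the \emph{lower bound}, subadditivity of $h$ and $h(g_1\bx)=\sup_{\bv\in\calC^*}g_1\langle\bx,\bv\rangle\le|g_1|$ give $h(\bg)\le h\big((\bI_n-\bx\bx^\top)\bg\big)+|g_1|$, so after taking expectations $\omega(\calC^*)\le\sqrt{\zeta_{\rm PO}(\bx;f)}+\sqrt{2/\pi}$. Combining with $\omega^2(\calC^*)\ge\delta(\calC)-1$: if $\delta(\calC)\ge 1+\tfrac2\pi$ then $\sqrt{\zeta_{\rm PO}(\bx;f)}\ge\sqrt{\delta(\calC)-1}-\sqrt{2/\pi}\ge0$, and squaring while bounding $\sqrt{\delta(\calC)-1}\le\sqrt{\delta(\calC)}$ yields
\begin{align*}
  \zeta_{\rm PO}(\bx;f)\ge\big(\sqrt{\delta(\calC)-1}-\sqrt{2/\pi}\big)^2\ge\delta(\calC)-\sqrt{8\delta(\calC)/\pi}-\Big(1-\tfrac2\pi\Big);
\end{align*}
if instead $\delta(\calC)<1+\tfrac2\pi$, then in particular $\delta(\calC)<\big(1+\sqrt{2/\pi}\big)^2$, and since the right-hand side above, viewed as a quadratic in $\sqrt{\delta(\calC)}$, has larger root $1+\sqrt{2/\pi}$, it is $\le 0$ in this regime, so the bound holds trivially because $\zeta_{\rm PO}(\bx;f)\ge0$.

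\emph{Main obstacle.} The whole argument lives or dies on the first step: one has to notice that the signal-dependent norm $f_\bx=f\circ\bQ_\bx^{-1}$ is engineered precisely so that $\zeta_{\rm PO}(\bx;f)$ equals the squared Gaussian-width functional of the descent cone $\calT_{f_\bx}(\bx)$, but evaluated at the projection $(\bI_n-\bx\bx^\top)\bg$ rather than at $\bg$ itself --- which is exactly the only way it can differ from $\delta(\calT_{f_\bx}(\bx))$. After that the work is routine sublinearity/convexity of support functions together with the standard $\omega^2(\calC^*)\le\delta(\calC)\le\omega^2(\calC^*)+1$ sandwich; the one point that requires care is keeping $\sqrt{\delta(\calC)-1}-\sqrt{2/\pi}$ un-expanded before squaring, since the sharp additive term $1-\tfrac2\pi$ is precisely the ``$1$'' from that sandwich minus the ``$2/\pi=(\mathbbm{E}|g_1|)^2$'' contributed by the cross term in the square.
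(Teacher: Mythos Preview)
Your proof is correct and follows essentially the same route as the paper: you make the substitution $\bv=\bQ_\bx\bu$ to identify $\bQ_\bx\calT_f(\bx)=\calT_{f_\bx}(\bx)$, then bound $\mathbbm{E}\sup_{\bv\in\calC^*}\langle(\bI_n-\bx\bx^\top)\bg,\bv\rangle$ above and below by $\omega(\calC^*)$ and $\omega(\calC^*)-\sqrt{2/\pi}$ respectively, and finish with the $\omega^2\le\delta\le\omega^2+1$ sandwich. Your Jensen/subadditivity justifications for the upper and lower comparisons and the case split when squaring simply fill in the details that the paper's proof leaves implicit in its final ``taking square'' line.
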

     The proof of Proposition \ref{pro1} can be found in Appendix \ref{provepro1}. In the following,  $\delta(\calT_{f_{\bx}}(\bx))$ will be used as a reliable surrogate for $\zeta_{\rm PO}(\bx;f)$. This is because the term $1-2/\pi+\sqrt{8\delta(\calT_{f_{\bx}}(\bx))/\pi}$ is much smaller than $\zeta_{\rm PO}(\bx;f)$. Hence, Proposition \ref{pro1} leads us to the following interesting remark. 
   \begin{rem}
      We have $\zeta_{\rm PO}(\bx;f) \approx \delta\big(\calT_{f_{\bx}}(\bx)\big) = \zeta_{\rm LN}(\bx;f_{\bx})$. Thus, using PO-CS to recover $\bx\in \mathbb{S}^{n-1}$ with norm $f$ needs nearly the same number of measurements as using linear compressed sensing via basis pursuit with the signal-dependent norm $f_{\bx}$.   
   \end{rem}    

\section{PO-CS of Sparse Signal and Low-Rank Matrix}\label{sec3}
To provide a clearer comparison between \(\zeta_{\rm PO}\) and \(\zeta_{\rm LN}\), we focus on two specific signal structures: sparse vectors and low-rank matrices. Using Proposition \ref{pro1}, we derive asymptotically exact formulas for \(\zeta_{\rm PO}(\bx; f)\) in these cases. The following theorem presents a simplified approximation for \(\zeta_{\rm PO}\).

\begin{theorem}  \label{thm2}  Define the  following surrogate for $\zeta_{\rm PO}(\bx;f)$
    \begin{align*}
        \hat{\zeta}_{\rm PO}(\bx;f) := \inf_{\tau\ge 0}~\mathbbm{E}\Big[\dist^2(\bg, \tau\cdot\bQ_{\bx}^{-1}\partial f(\bx))\Big]
    \end{align*}
    with $\bg\sim \calN(0,\bI_n)$. Then we have  
    \begin{align*}
      \hat{\zeta}_{\rm PO}(\bx;f)  -  \Big(\frac{8\hat{\zeta}_{\rm PO}(\bx;f) }{\pi}\Big)^{1/2} - \frac{\sqrt{2\pi}\rad(\bQ_{\bx}^{-1}\partial f(\bx))}{f(\bx)}- \Big(1-\frac{2}{\pi}\Big)\le \zeta_{\rm PO}(\bx;f) \le \hat{\zeta}_{\rm PO}(\bx;f).
    \end{align*}
\end{theorem}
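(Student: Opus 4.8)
The plan is to deduce Theorem~\ref{thm2} from Proposition~\ref{pro1} together with the Amelunxen--Lotz--McCoy--Tropp bound on the statistical dimension of the descent cone of a norm (\cite[Prop.~4.5]{amelunxen2014living}), once we recognize that $\hat{\zeta}_{\rm PO}(\bx;f)$ is exactly the ``$\inf_\tau$-functional'' of the signal-dependent norm $f_{\bx}(\bw)=f(\bQ_{\bx}^{-1}\bw)$ from Proposition~\ref{pro1}. The first step is bookkeeping on $\bQ_{\bx}$ and $f_{\bx}$: from $\bQ_{\bx}\bx=\sqrt{\pi/2}\,\bx$ we get $\bQ_{\bx}^{-1}\bx=\sqrt{2/\pi}\,\bx$, hence $f_{\bx}(\bx)=f(\bQ_{\bx}^{-1}\bx)=\sqrt{2/\pi}\,f(\bx)$; and since $\bQ_{\bx}^{-1}=\bI_n-(1-\sqrt{2/\pi})\bx\bx^\top$ is symmetric and nonsingular, the subdifferential chain rule for an invertible linear substitution gives $\partial f_{\bx}(\bw)=\bQ_{\bx}^{-1}\partial f(\bQ_{\bx}^{-1}\bw)$, which at $\bw=\bx$ together with the positive homogeneity $\partial f(\lambda\bx)=\partial f(\bx)$ ($\lambda>0$) yields $\partial f_{\bx}(\bx)=\bQ_{\bx}^{-1}\partial f(\bx)$. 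Thus $\hat{\zeta}_{\rm PO}(\bx;f)=\inf_{\tau\ge0}\mathbbm{E}[\dist^2(\bg,\tau\cdot\partial f_{\bx}(\bx))]$, and using $\rad(\bQ_{\bx}^{-1}\partial f(\bx))=\rad(\partial f_{\bx}(\bx))$ and $f(\bx)=\sqrt{\pi/2}\,f_{\bx}(\bx)$ one checks that the error term in the theorem equals $2\,\rad(\partial f_{\bx}(\bx))/f_{\bx}(\bx)$.

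Next I would apply \cite[Prop.~4.5]{amelunxen2014living} to the genuine norm $f_{\bx}$ at the nonzero point $\bx$ (note $\|\bx\|_2=1$), which gives
\[
\hat{\zeta}_{\rm PO}(\bx;f)-\frac{\sqrt{2\pi}\,\rad(\bQ_{\bx}^{-1}\partial f(\bx))}{f(\bx)}\;\le\;\delta\big(\calT_{f_{\bx}}(\bx)\big)\;\le\;\hat{\zeta}_{\rm PO}(\bx;f).
\]
Abbreviating $D:=\delta(\calT_{f_{\bx}}(\bx))$, $Z:=\hat{\zeta}_{\rm PO}(\bx;f)$ and $E:=\sqrt{2\pi}\,\rad(\bQ_{\bx}^{-1}\partial f(\bx))/f(\bx)$, this reads $Z-E\le D\le Z$. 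The theorem's upper bound is then immediate from the upper bound in Proposition~\ref{pro1}: $\zeta_{\rm PO}(\bx;f)\le D\le Z$. For the lower bound, Proposition~\ref{pro1} gives $\zeta_{\rm PO}(\bx;f)\ge D-\sqrt{8D/\pi}-(1-2/\pi)$; applying $D\ge Z-E$ to the linear term and $D\le Z$ (monotonicity of the square root) to the subtracted root term gives $D-\sqrt{8D/\pi}\ge (Z-E)-\sqrt{8Z/\pi}$, whence $\zeta_{\rm PO}(\bx;f)\ge Z-\sqrt{8Z/\pi}-E-(1-2/\pi)$, which is precisely the claimed lower bound.

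I expect the only delicate point to be the ``translation'' step rather than any hard estimate. One must check that the subdifferential chain rule applies (it does: $\bQ_{\bx}^{-1}$ is an invertible linear map, so there is no constraint qualification issue, and it is symmetric so no transpose survives), that $f_{\bx}$ is indeed a norm and $\bx\ne\bzero$ (so $\bzero\notin\partial f_{\bx}(\bx)$ and $\mathrm{cone}(\partial f_{\bx}(\bx))$ is the polar of $\calT_{f_{\bx}}(\bx)$, which is what \cite[Prop.~4.5]{amelunxen2014living} needs), and that the numerical constants collapse as claimed ($\tfrac{2}{\sqrt{2/\pi}}=\sqrt{2\pi}$). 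If one prefers not to cite \cite{amelunxen2014living} here, the gap estimate $Z-E\le D$ can be reproduced directly: write $D=\mathbbm{E}\dist^2(\bg,\mathrm{cone}(\partial f_{\bx}(\bx)))=\mathbbm{E}\inf_{\tau\ge0}\dist^2(\bg,\tau\partial f_{\bx}(\bx))$, bound $Z$ by plugging in the deterministic scale $\tau_0:=\mathbbm{E}[\tau^\ast(\bg)]$ (with $\tau^\ast(\bg)$ the optimal random scale in the projection of $\bg$ onto $\mathrm{cone}(\partial f_{\bx}(\bx))$), and expand $\|\bg-\tau_0\bs^\ast(\bg)\|_2^2$; the cross term is controlled via $\rad(\partial f_{\bx}(\bx))$ and the normalization $\langle\bs,\bx\rangle=f_{\bx}(\bx)$ valid for every $\bs\in\partial f_{\bx}(\bx)$. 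Everything else is routine.
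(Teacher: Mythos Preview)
Your proposal is correct and follows essentially the same route as the paper: compute $f_{\bx}(\bx)=\sqrt{2/\pi}\,f(\bx)$ and $\partial f_{\bx}(\bx)=\bQ_{\bx}^{-1}\partial f(\bx)$, apply the Amelunxen--Lotz--McCoy--Tropp bound to the norm $f_{\bx}$, and combine with Proposition~\ref{pro1}. The one cosmetic slip is the reference number---the general distance-to-subdifferential bound you invoke is \cite[Thm.~4.3]{amelunxen2014living}, not Prop.~4.5 (which is the $\ell_1$-specific computation).
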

 Proposition \ref{pro1} allows us to use  $\delta(\calT_{f_{\bx}}(\bx))$ as a surrogate for $\zeta_{\rm PO}(f;\bx)$, and to prove Theorem \ref{thm2}, all that remains is to compute $\delta(\calT_{f_{\bx}}(\bx))$ via the general recipe developed in \cite{amelunxen2014living}.  See Appendix \ref{provethm2}.
%In the following, we will establish explicit formulas for $\hat{\zeta}_{\rm PO}(\bx;f)$ in the recovery and low-rank recovery. 
We are now in a position to derive explicit formulas for $\zeta_{\rm PO}(\bx;f)$ in sparse   and low-rank recovery problems. 

\subsection{Sparse   Recovery}
We consider the recovery of $\bx\in \mathbb{S}^{n-1}$ with $s$ non-zero entries via basis pursuit with the $\ell_1$-norm $f(\bw)=\|\bw\|_1$. We remind the reader that our goal is to compare the phase transition of the linear compressed sensing problem with that of PO-CS. 

For the linear compressed sensing problem, we generate  $\bA\sim \calN^{m\times n}(0,1)$ to obtain $\by=\bA\bx$, and then solve basis pursuit ($\min~\|\bw\|_1,~~{\rm s.t. 
 }~~\bA\bw=\by$) to obtain an estimate. The phase transition is located at $\zeta_{\rm LN}(\bx;\|\cdot\|_1)$ that can be well approximated by (see for instance \cite[Prop. 4.5]{amelunxen2014living})
\begin{align}\label{eq:zeta_LN_sparse}
    \hat{\zeta}_{\rm LN}(\bx;\|\cdot\|_1) := n\psi_1 \Big(\frac{s}{n}\Big),\quad{\rm where~} \psi_1(u):= \inf_{\tau\ge 0}~\Big\{u(1+\tau^2)+(1-u)\sqrt{\frac{2}{\pi}}\int_{\tau}^\infty (w-\tau)^2 e^{-\frac{w^2}{2}}~\text{d}w\Big\}.  
\end{align}
%It is easy to show that, for any $u\in (0,1)$ we have
%\begin{align}
 %   \psi_1(u) = u\big(1+(\tau_1^*)^2\big) + (1-u)\sqrt{2/\pi}\int_{\tau_1^*}^\infty (w-\tau_1^*)^2\exp\big(-\frac{w^2}{2}\big)~\text{d}w
%\end{align}
%where $\tau_1^*>0$ is uniquely determined by the equation 
%\begin{align}
 %   \int_{\tau_1^*}^\infty \big(\frac{w}{\tau_1^*}-1\big)\exp\big(-\frac{w^2}{2}\big)~\text{d}w = \sqrt{\frac{\pi}{2}}\frac{u}{1-u}. 
%\end{align}

In the phase-only sensing scenario, we generate a complex-valued   matrix $\bPhi \sim \calN^{m\times n}(0,1)+ \calN^{m\times n}(0,1)\bi$ to observe the phases $\bz= \sign(\bPhi\bx)$, solve $\hat{\bx}$ from (\ref{1.2})--(\ref{Azphi}) with $f$ being $\ell_1$-norm, and then use $\bx^\sharp = \hat{\bx}/\|\hat{\bx}\|_2$ as the final estimate. The following result calculates $\hat{\zeta}_{\rm PO}(\bx;\|\cdot\|_1)$. 
\begin{theorem} \label{spa_formu}
    In the sparse recovery setting described above, we have 
    \begin{align}\label{61}
        \hat{\zeta}_{\rm PO}(\bx;\|\cdot\|_1) =  n\psi\Big(\frac{s}{n},\frac{\|\bx\|_1^2}{s}\Big),
    \end{align}
    where for any $(u,v)\in(0,1)\times (0,1]$,  
    \begin{align}\label{62}
        \psi(u,v):&=\inf_{\tau\ge 0}~\Big\{u\Big(1+\tau^2-\tau^2v (1-\frac{2}{\pi})\Big) + (1-u)\sqrt{\frac{2}{\pi}}\int_{\tau}^\infty (w-\tau)^2 e^{-\frac{w^2}{2}}~\text{d}w\Big\}.
    \end{align}
    Consequently, we have
    \begin{align}\label{63}
     n\psi \Big(\frac{s}{n},\frac{\|\bx\|_1^2}{s}\Big) -\Big(\frac{8n\psi(\frac{s}{n},\frac{\|\bx\|_1^2}{s})}{\pi}\Big)^{1/2} - \frac{\sqrt{2\pi n}}{\|\bx\|_1} -\Big(1-\frac{2}{\pi}\Big)  \le \zeta_{\rm PO}(\bx;f)\le n\psi \Big(\frac{s}{n},\frac{\|\bx\|_1^2}{s}\Big). 
    \end{align}
\end{theorem}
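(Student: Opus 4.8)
The plan is to evaluate the surrogate $\hat{\zeta}_{\rm PO}(\bx;\|\cdot\|_1)=\inf_{\tau\ge0}\mathbbm{E}\big[\dist^2(\bg,\tau\bQ_{\bx}^{-1}\partial\|\bx\|_1)\big]$ in closed form and then feed the result into Theorem \ref{thm2}. Write $S=\supp(\bx)$, so $|S|=s$, and recall that $\partial\|\bx\|_1=\{\bv\in\mathbb{R}^n:v_i=\sign(x_i)\text{ for }i\in S,\ |v_i|\le1\text{ for }i\notin S\}$. The first and crucial observation is that $\langle\bx,\bv\rangle=\sum_{i\in S}x_i\sign(x_i)=\|\bx\|_1$ is \emph{constant} over $\bv\in\partial\|\bx\|_1$; since $\bQ_{\bx}^{-1}=\bI_n-(1-\sqrt{2/\pi})\bx\bx^\top$, this means $\bQ_{\bx}^{-1}\bv=\bv-(1-\sqrt{2/\pi})\|\bx\|_1\bx$, i.e.\ $\bQ_{\bx}^{-1}$ acts on the entire subdifferential as a single fixed translation. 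Writing $\alpha:=1-\sqrt{2/\pi}$, I would then use this to rewrite $\dist^2(\bg,\tau\bQ_{\bx}^{-1}\partial\|\bx\|_1)=\inf_{\bv\in\partial\|\bx\|_1}\|(\bg+\tau\alpha\|\bx\|_1\bx)-\tau\bv\|_2^2$, which now decouples coordinatewise because the constraints defining $\partial\|\bx\|_1$ are separable across coordinates.

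Next I would carry out the two coordinatewise minimizations and take expectations. For $i\notin S$ we have $x_i=0$ and $v_i$ ranging over $[-1,1]$, so the $i$-th term minimizes to the soft-threshold value $(|g_i|-\tau)_+^2$, whose expectation equals $\sqrt{2/\pi}\int_\tau^\infty(w-\tau)^2e^{-w^2/2}\,\mathrm{d}w$; summing the $n-s$ such terms reproduces the integral term of $\psi$. For $i\in S$ the coordinate $v_i=\sign(x_i)$ is forced, so the $i$-th term is $(g_i-a_i)^2$ with $a_i:=\tau\big(\sign(x_i)-\alpha\|\bx\|_1x_i\big)$, and $\mathbbm{E}[(g_i-a_i)^2]=1+a_i^2$. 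Expanding $a_i^2$ and summing over $i\in S$, using $\sum_{i\in S}|x_i|=\|\bx\|_1$ and $\sum_{i\in S}x_i^2=\|\bx\|_2^2=1$, gives $\sum_{i\in S}(1+a_i^2)=s+\tau^2\big(s-(2\alpha-\alpha^2)\|\bx\|_1^2\big)$. The single algebraic identity I need is $2\alpha-\alpha^2=1-2/\pi$, immediate from $\alpha=1-\sqrt{2/\pi}$, which collapses this to $s+\tau^2s-\tau^2(1-2/\pi)\|\bx\|_1^2$. Adding the $S$ and non-$S$ contributions, dividing by $n$, and substituting $u=s/n$ and $v=\|\bx\|_1^2/s$ (so that $\|\bx\|_1^2/n=uv$) produces exactly $n\psi(s/n,\|\bx\|_1^2/s)$, which is \eqref{61}--\eqref{62}.

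For the two-sided bound \eqref{63} I would simply plug this identity into Theorem \ref{thm2} after controlling the radius term. For any $\bv\in\partial\|\bx\|_1$, $\|\bQ_{\bx}^{-1}\bv\|_2^2=\|\bv\|_2^2-(2\alpha-\alpha^2)\|\bx\|_1^2\le\|\bv\|_2^2\le n$, where the middle step uses $2\alpha-\alpha^2=1-2/\pi>0$ and the last step uses $v_i^2=1$ for $i\in S$ and $v_i^2\le1$ for $i\notin S$. Hence $\rad(\bQ_{\bx}^{-1}\partial\|\bx\|_1)\le\sqrt n$, and since $f(\bx)=\|\bx\|_1$ we get $\sqrt{2\pi}\,\rad(\bQ_{\bx}^{-1}\partial\|\bx\|_1)/f(\bx)\le\sqrt{2\pi n}/\|\bx\|_1$. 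Substituting $\hat{\zeta}_{\rm PO}(\bx;\|\cdot\|_1)=n\psi(s/n,\|\bx\|_1^2/s)$ and this radius bound into the inequality of Theorem \ref{thm2} yields \eqref{63}.

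The only step I expect to require genuine care is the coordinatewise decoupling of the distance: one must verify that the additive offset $\tau\alpha\|\bx\|_1\bx$ inside the norm does not depend on the particular $\bv\in\partial\|\bx\|_1$, which holds precisely because $\langle\bx,\bv\rangle$ is constant on the $\ell_1$ subdifferential --- only then does $\inf_{\bv}$ legitimately split into $n$ one-dimensional problems. Everything else is the standard soft-thresholding computation together with the elementary Gaussian integral $\mathbbm{E}[(|g|-\tau)_+^2]=\sqrt{2/\pi}\int_\tau^\infty(w-\tau)^2e^{-w^2/2}\,\mathrm{d}w$ and the identity $2\alpha-\alpha^2=1-2/\pi$; no concentration or large-deviation argument is involved.
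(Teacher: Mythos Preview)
Your proposal is correct and follows essentially the same approach as the paper: both compute $\bQ_{\bx}^{-1}\partial\|\bx\|_1$ explicitly, decouple $\mathbbm{E}[\dist^2(\bg,\tau\bQ_{\bx}^{-1}\partial\|\bx\|_1)]$ into on-support and off-support coordinates, use the soft-threshold calculation for the latter, and simplify via $2\alpha-\alpha^2=1-2/\pi$. Your explicit radius bound $\rad(\bQ_{\bx}^{-1}\partial\|\bx\|_1)\le\sqrt{n}$ is a detail the paper leaves implicit when invoking Theorem~\ref{thm2}, but it is exactly the missing piece needed to obtain the $\sqrt{2\pi n}/\|\bx\|_1$ term in \eqref{63}.
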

    The proof can be found in Appendix \ref{provethm3}. Compared to $\zeta_{\rm LN}(\bx;\|\cdot\|_1)$ and its surrogate $n\psi_1(\frac{s}{n})$ that only depend on $(n,s)$, we note that the surrogate threshold $\hat{\zeta}_{\rm PO}(\bx,\|\cdot\|_1)$ for PO-CS of sparse vector depends on $\|\bx\|_1$ too. We would like to make the following two remarks that clarify some of the interesting features of the phase transition of PO-CS. 

    \begin{rem}
    Since $\|\bx\|_2 =1$, it is straightforward to see that  $\|\bx\|_1=\sqrt{s}$, which corresponds to equal amplitude $s$-sparse signals, leads to the smallest $n\psi(\frac{s}{n},\frac{\|\bx\|_1^2}{s})$. In other words, in terms of the phase transition, the equal amplitude signal is the `most favorable signal' for PO-CS. While in the linear compressed sensing the amplitude of the non-zero coefficients do not have any impact on the location of the phase transition; in the noisy linear compressed sensing, as shown in \cite{donoho2011noise}, the equal amplitude signal is in fact the least favorable distribution.
    \end{rem}
    
\begin{rem}
The expression $n\psi(\frac{s}{n},\frac{\|\bx\|_1^2}{s})$ gives an asymptotically exact formula for the phase transition of PO-CS in the sense that $ \frac{\zeta_{\rm PO}(\bx;f)}{n} \to \psi\big(\frac{s}{n},\frac{\|\bx\|_1^2}{s}\big)$ as  $n\to \infty$.    
 \end{rem}

\subsection{Low-Rank   Recovery}
The aim of this section is to study the recovery of a rank-$r$ matrix $\bX\in\mathbb{R}^{p\times q}$ ($r \le p\le q$) with unit Frobenius norm.  Similar to the previous subsection, to study the linear compressed sensing, we generate sensing matrices $\{\calA_i\}_{i=1}^m$ with i.i.d. $\calN(0,1)$ entries to observe $\{y_i=\langle \calA_i,\bX\rangle\}_{i=1}^m$, and then solve basis pursuit ($\min~\|\bU\|_{nu},~~{\rm s.t.~~}y_i=\langle\calA_i,\bU\rangle,~\forall i \in [m]$) to obtain the estimate. An    asymptotically exact formula for the phase transition threshold $\zeta_{\rm LN}(\bX;\|\cdot\|_{nu})$ is given by $pq\Psi_1(\frac{r}{p},\frac{p}{q})$ (see for instance \cite[Prop. 4.7]{amelunxen2014living}):  
\begin{align}\label{eq:zeta_LN}
    \frac{\zeta_{\rm LN}(\bX;\|\cdot\|_{nu})}{pq} \to \Psi_1(\rho,\nu)
\end{align}
as $r,p,q\to\infty$  with limiting ratios $\frac{r}{p}\to \rho \in(0,1)$ and $\frac{p}{q}\to \nu\in (0,1]$, where $\Psi_1(\rho,\nu)$ is defined as 
\begin{gather}\label{Psi1mintau}
    \Psi_1(\rho,\nu):= \inf_{\tau\ge 0}~\Big\{\rho\nu+(1-\rho\nu)\Big[\rho(1+\tau^2)+(1-\rho)\int_{\max\{a_-,\tau\}}^{a_+}(b-\tau)^2\varphi_{y}(b)~\text{d}b\Big]\Big\}
\end{gather}
with $y = \frac{\nu-\rho\nu}{1-\rho \nu}$, $a_{\pm}=1\pm\sqrt{y}$, and 
\begin{align}\label{phiyb}
    \varphi_y(b) = \frac{1}{\pi yb}\sqrt{(b^2-a_-^2)(a_+^2-b^2)},\quad b\in [a_-,a_+]
\end{align}
is a probability density supported on $[a_-,a_+]$. %Moreover, for any $(\rho,\nu)\in(0,1)\times(0,1]$, the $\tau$ such that (\ref{Psi1mintau}) attains the minimum is the unique $\tau_1^*>0$ satisfying 
 %\begin{align}
  %      \int^{a_+}_{\max\{a_-,\tau_1^*\}}\big(\frac{b}{\tau^*_1}-1\big)\varphi_y(b)~\text{d}b = \frac{\rho}{1-\rho}.
   % \end{align}

In  PO-CS, we generate complex-valued sensing matrices $\{ \bPhi_i\}_{i=1}^m$ with i.i.d. $\calN(0,1)+\calN(0,1)\bi$ entries to observe the phases $\{y_i=\sign(\langle \bPhi_i,\bX\rangle)\}$, and then we can use the procedure described in (\ref{1.2})--(\ref{Azphi}) with $f$ being the nuclear norm to recover the underlying matrix. The following theorem provides an asymptotically exact formula for $ \zeta_{\rm PO}(\bX;\|\cdot\|_{nu})$, suggesting $pq\Psi(\frac{r}{p},\frac{p}{q},\frac{\|\bX\|_{nu}^2}{r})$ as a surrogate. 

\begin{theorem}\label{lowrankfor}
    In the above-described setting of PO-CS of a rank-$r$ matrix $\bX\in \mathbb{R}^{p\times q}$, suppose $r,p,q\to\infty$ with limiting ratios $\frac{r}{p}\to \rho\in(0,1)$ and $\frac{p}{q}\to \nu\in (0,1]$, then we have 
    \begin{align}\label{eq:zeta_PO}
        \frac{{\zeta}_{\rm PO}(\bX;\|\cdot\|_{nu})}{pq} \to \Psi\Big(\rho,\nu,\frac{\|\bX\|_{nu}^2}{r}\Big)
    \end{align}
    where $\Psi(\rho,\nu,\mu)$ is defined as 
    \begin{align}
        \Psi(\rho,\nu,\mu) := \inf_{\tau\ge 0}~\Big\{\rho\nu+(1-\rho\nu)\Big[\rho\Big(1+\Big[1-(1-\frac{2}{\pi})\mu\Big]\tau^2\Big)+(1-\rho)\int_{\max\{a_-,\tau\}}^{a_+}(b-\tau)^2\varphi_{y}(b)~\mathrm{d}b\Big]\Big\}, \label{inftau}
    \end{align}
    with $y= \frac{\nu-\rho \nu}{1-\rho\nu}$, $a_{\pm}= 1\pm\sqrt{y}$ and $\varphi_y(b)$ defined in (\ref{phiyb}). %For $\rho\in (0,1)$ and $\nu,\mu\in(0,1]$, the    $\tau$ such that (\ref{inftau}) attains the minimum is the unique $\tau^*>0$ satisfying 
  %  \begin{align}\label{taustar}
   %     \int^{a_+}_{\max\{a_-,\tau^*\}}\big(\frac{b}{\tau^*}-1\big)\varphi_y(b)~\text{d}b = \frac{\rho(1-(1-2/\pi)\mu)}{1-\rho}.
   % \end{align}
\end{theorem}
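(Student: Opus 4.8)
The plan is to derive \eqref{eq:zeta_PO} from Theorem~\ref{thm2} by computing $\lim \hat{\zeta}_{\rm PO}(\bX;\|\cdot\|_{nu})/(pq)$ and then checking that the correction terms in Theorem~\ref{thm2} are of lower order. First I would dispose of the corrections. Writing $\bX=\bU\bSigma\bV^\top$ for the compact SVD, every $\bZ\in\partial\|\cdot\|_{nu}(\bX)$ has the form $\bZ=\bU\bV^\top+\bW$ with $\bU^\top\bW=\bzero$, $\bW\bV=\bzero$ and $\|\bW\|_{op}\le 1$, so $\|\bZ\|_F^2=r+\|\bW\|_F^2\le r+(p-r)=p$ and $\langle\bX,\bZ\rangle=\langle\bX,\bU\bV^\top\rangle=\|\bX\|_{nu}$. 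Since $\bQ_\bX^{-1}\bZ=\bZ-(1-\sqrt{2/\pi})\langle\bX,\bZ\rangle\bX=\bZ-(1-\sqrt{2/\pi})\|\bX\|_{nu}\bX$, the triangle inequality gives $\rad\big(\bQ_\bX^{-1}\partial\|\cdot\|_{nu}(\bX)\big)\le\sqrt p+\|\bX\|_{nu}\le 2\sqrt p$, while $\|\bX\|_{nu}\ge\|\bX\|_F=1$. Granting the limit computed below, $\hat{\zeta}_{\rm PO}\gtrsim pq$, so the three negative terms in Theorem~\ref{thm2}, namely $(8\hat{\zeta}_{\rm PO}/\pi)^{1/2}$, $\sqrt{2\pi}\,\rad(\bQ_\bX^{-1}\partial\|\cdot\|_{nu}(\bX))/\|\bX\|_{nu}$ and $1-2/\pi$, are all $O(\sqrt{pq})=o(pq)$; hence it suffices to prove $\hat{\zeta}_{\rm PO}(\bX;\|\cdot\|_{nu})/(pq)\to\Psi(\rho,\nu,\mu)$.

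To compute $\hat{\zeta}_{\rm PO}$, by the orthogonal invariance of both the i.i.d.\ Gaussian ensemble and the nuclear norm I may take $\bX$ diagonal, so that $\bU\bV^\top$ and $\bX$ are supported on the top-left $r\times r$ block and the free part $\bW$ on the bottom-right $(p-r)\times(q-r)$ block. Writing $\bG\sim\calN^{p\times q}(0,1)$ in the conformal $2\times 2$ block form, $\dist^2\big(\bG,\tau\bQ_\bX^{-1}\partial\|\cdot\|_{nu}(\bX)\big)$ separates into four independent pieces: the $(1,2)$ and $(2,1)$ blocks contribute $\mathbb{E}\|\bG_{12}\|_F^2+\mathbb{E}\|\bG_{21}\|_F^2=r(q-r)+(p-r)r$; the $(1,1)$ block of $\tau\bQ_\bX^{-1}\partial\|\cdot\|_{nu}(\bX)$ is a single point $\tau\big(\bI_r-(1-\sqrt{2/\pi})\|\bX\|_{nu}\bSigma_r\big)$, contributing $r^2+\tau^2\sum_{i=1}^r\big(1-(1-\sqrt{2/\pi})\|\bX\|_{nu}\sigma_i\big)^2$, where the identities $\sum_i\sigma_i^2=1$ and $\sum_i\sigma_i=\|\bX\|_{nu}$ (combined with $(1-\sqrt{2/\pi})$ being the right constant so that $2(1-\sqrt{2/\pi})-(1-\sqrt{2/\pi})^2=1-\tfrac2\pi$) collapse the sum to $r-(1-\tfrac2\pi)\|\bX\|_{nu}^2=r\big(1-(1-\tfrac2\pi)\mu\big)$; and the $(2,2)$ block contributes $\mathbb{E}\big[\dist^2(\bG_{22},\tau\calB_{op})\big]=\mathbb{E}\sum_j\big(s_j(\bG_{22})-\tau\big)_+^2$, where $\calB_{op}=\{\bM:\|\bM\|_{op}\le1\}$ and the $s_j$ are singular values. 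Substituting $\tau=\sqrt{q-r}\,\theta$ and normalizing, the Marchenko--Pastur theorem (the empirical singular-value law of $\bG_{22}/\sqrt{q-r}$ converging to $\varphi_y$ with $y=\tfrac{\nu(1-\rho)}{1-\rho\nu}$) gives $\mathbb{E}[\dist^2(\bG_{22},\tau\calB_{op})]/\big((p-r)(q-r)\big)\to\int_{\max\{a_-,\theta\}}^{a_+}(b-\theta)^2\varphi_y(b)\,\mathrm{d}b$. Collecting the pieces, dividing by $pq$, and using $r(p+q-r)/(pq)\to\rho\nu+\rho(1-\rho\nu)$, $(q-r)r/(pq)\to\rho(1-\rho\nu)$, $(p-r)(q-r)/(pq)\to(1-\rho)(1-\rho\nu)$, the pointwise limit of the $\theta$-objective equals $\rho\nu+(1-\rho\nu)\big[\rho\big(1+[1-(1-\tfrac2\pi)\mu]\theta^2\big)+(1-\rho)\int_{\max\{a_-,\theta\}}^{a_+}(b-\theta)^2\varphi_y(b)\,\mathrm{d}b\big]$, i.e.\ precisely the quantity inside the infimum in \eqref{inftau}.

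The step I expect to be the main obstacle is justifying the exchange of the limit $r,p,q\to\infty$ with $\inf_{\tau\ge0}$, and upgrading the Marchenko--Pastur convergence from the distributional level to the level of the normalized expectation $\mathbb{E}[\dist^2(\bG_{22},\tau\calB_{op})]/((p-r)(q-r))$, uniformly over $\theta$ on compact sets. This requires controlling the largest singular value of $\bG_{22}$ (e.g.\ $\mathbb{E}\|\bG_{22}\|_{op}^2\lesssim q$ plus a concentration inequality) so that the spectral edge does not disturb the limiting integral; the interchange of limit and infimum then follows from the convexity of $\tau\mapsto\mathbb{E}[\dist^2(\bg,\tau C)]$ for any convex set $C$ (here $C=\bQ_\bX^{-1}\partial\|\cdot\|_{nu}(\bX)$), which forces the convergence to be uniform on compact $\theta$-intervals, together with coercivity of the objective (ensured since $1-(1-\tfrac2\pi)\mu>0$ because $\mu\le 1$). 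This is essentially the technical content already handled for the linear nuclear-norm problem in \cite[Prop.~4.7]{amelunxen2014living}, whose argument adapts here once the extra $(1,1)$-block shift by $(1-\sqrt{2/\pi})\|\bX\|_{nu}\bX$ has been incorporated as above; the genuinely new ingredient is the exact evaluation and simplification of that block using $\|\bX\|_F=1$.
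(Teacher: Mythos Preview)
Your proposal is correct and follows essentially the same route as the paper: both arguments block-decompose $\dist^2\big(\bG,\tau\,\bQ_{\bX}^{-1}\partial\|\cdot\|_{nu}(\bX)\big)$ via the SVD, reduce the $(1,1)$ block using $\sum_i\sigma_i=\|\bX\|_{nu}$ and $\sum_i\sigma_i^2=1$, handle the $(2,2)$ block by the Marchenko--Pastur law after rescaling $\tau=\sqrt{q-r}\,\theta$, interchange $\inf_\tau$ and the limit via convexity as in \cite[App.~D.3]{amelunxen2014living}, and absorb the Theorem~\ref{thm2} correction terms as $o(pq)$. The only cosmetic difference is that you diagonalize $\bX$ at the outset by orthogonal invariance, whereas the paper keeps the orthogonal factors $\bU,\bV$ explicit throughout.
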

      The proof of this theorem can be found in Appendix \ref{provethm4}.

    Compared to $\zeta_{\rm LN}(\bX;\|\cdot\|_{nu})$ and its surrogate $pq\Psi_1(\rho,\nu)$ that depend only on $(p,q,r)$, we find that $\zeta_{\rm PO}(\bX;\|\cdot\|_{nu})$'s surrogate  also depends on $\|\bX\|_{nu}$, and evidently $\|\bX\|_{nu}=\sqrt{r}$ leads to the earliest phase transition, in the sense of the smallest $pq\cdot \Psi(\frac{r}{p},\frac{p}{q},\frac{\|\bX\|_{nu}^2}{r})$.

\subsection{Simulating  $\zeta_{\rm PO}/\zeta_{\rm LN}$} \label{ratio}
In this section, we   simulate the ratio $\frac{\zeta_{\rm PO}}{\zeta_{\rm LN}}$ by plotting the curves of $\frac{\hat{\zeta}_{\rm PO}}{\hat{\zeta}_{\rm LN}}$. For sparse signal recovery, we can use 
$\frac{\hat{\zeta}_{\rm PO}(\bx;\|\cdot\|_1)}{\hat{\zeta}_{\rm LN}(\bx;\|\cdot\|_1)} = {\psi(\frac{s}{n},\frac{\|\bx\|_1^2}{s})}/{\psi_1(\frac{s}{n})}
$
 as a surrogate; similarly, we can use 
$
     {\Psi(\frac{r}{p},\frac{p}{q},\frac{\|\bX\|_{nu}^2}{r})}/\Psi_1(\frac{r}{p},\frac{p}{q})
$ to  approximate the ratio in low-rank matrix recovery. Therefore, we are interested in the two ratio functions 
\begin{gather}
    R_{\rm sp}(u,v) = \frac{\psi(u,v)}{\psi_1(u)},\quad (u,v)\in(0,1] ^2,\\
    R_{\rm lr}(u,v,w) = \frac{\Psi(u,v,w)}{\Psi_1(u,v)},\quad (u,v,w)\in (0,1]^3. 
\end{gather}
Given $v,w\in (0,1]$, we find that $R_{\rm sp}(u,v)$ and $R_{\rm lr}(u,v,w)$ are monotonically increasing in $u\in (0,1]$ and they both converge to $1$ as $u\to 1^-$. See Figure \ref{fig:ratio} and the associated texts for details. We should just mention that an analytical analysis of $R_{\rm sp}$ and $R_{\rm lr}$ is not pursued in the present paper. 
\begin{figure}[!ht]
    \centering
    \includegraphics[width=0.4\linewidth]{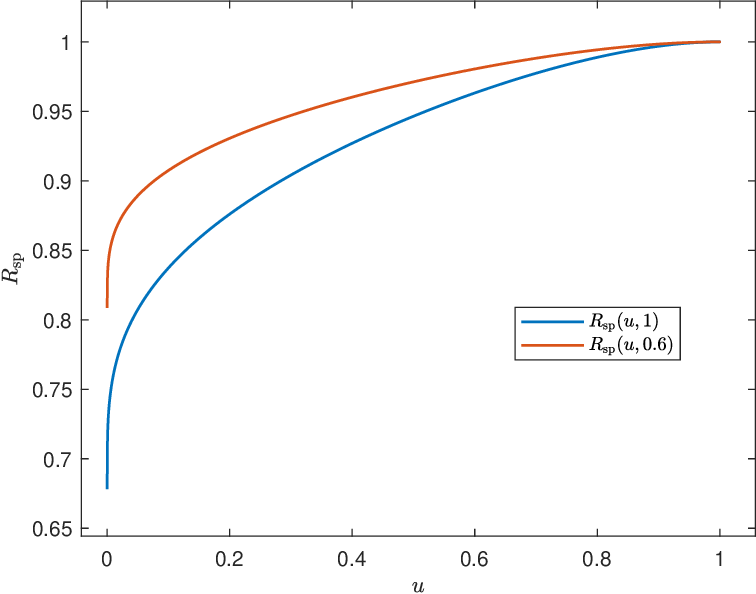}
    \includegraphics[width=0.4\linewidth]{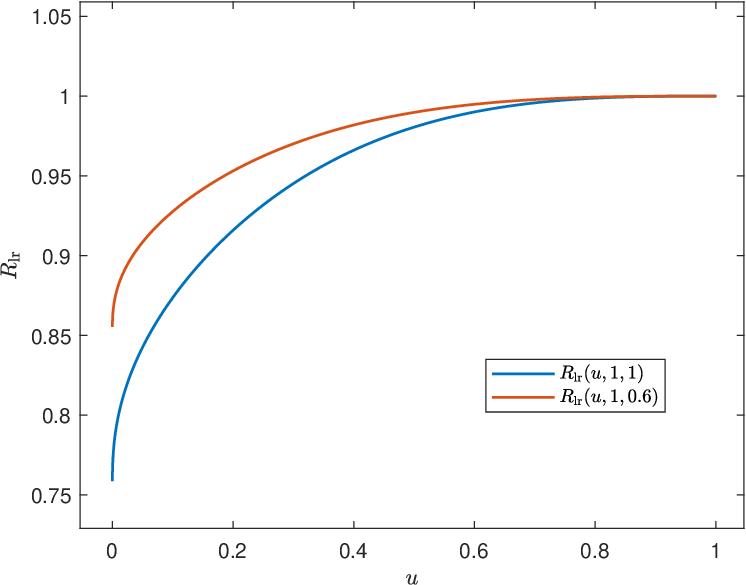}
    \caption{The left figure plots $R_{\rm sp}(u,1)$ and $R_{\rm sp}(u,0.6)$, showing    $\lim_{u\to 0^+}R_{\rm sp}(u,1)\approx 0.678$ and $\lim_{u\to 0^+}R_{\rm sp}(u,0.6)\approx 0.808$; the former indicates that for recovering $s$-sparse $\bx\in\mathbb{S}^{n-1}$ with nonzero entries being $\pm 1/\sqrt{s}$, if $s$ is fixed and $n\to \infty$, then PO-CS  requires no more than $0.68 \zeta_{\rm LN}(\bx;\|\cdot\|_1)$ phases to succeed, as we highlighted in the abstract. Similarly, the right figure plots $R_{\rm lr}(u,1,1)$ and $R_{\rm lr}(u,1,0.6)$, and we further report $\lim_{u\to 0^+} R_{\rm lr}(u,1,1)\approx 0.758$ and $\lim_{u\to 0^+} R_{\rm lr}(u,1,0.6)\approx 0.856$.}
    \label{fig:ratio}
\end{figure}

\section{Numerical Results}\label{sec4}
 In this section, we corroborate our phase transition threshold with numerical simulations. We focus on PO-CS of sparse vectors and low-rank matrices. Our results demonstrate that the empirical phase transition locations can be accurately predicted by the theoretical values, and that  $\zeta_{\rm PO}$ is inversely proportional to $\|\bx\|_1$ (or $\|\bX\|_{nu}$) under fixed $(n,s)$ (or $(p,q,r)$); see Figures \ref{fig:sparse}-\ref{fig:LR} and the associated texts. 
\begin{figure}[!ht]
    \centering
    \includegraphics[width=0.44\linewidth]{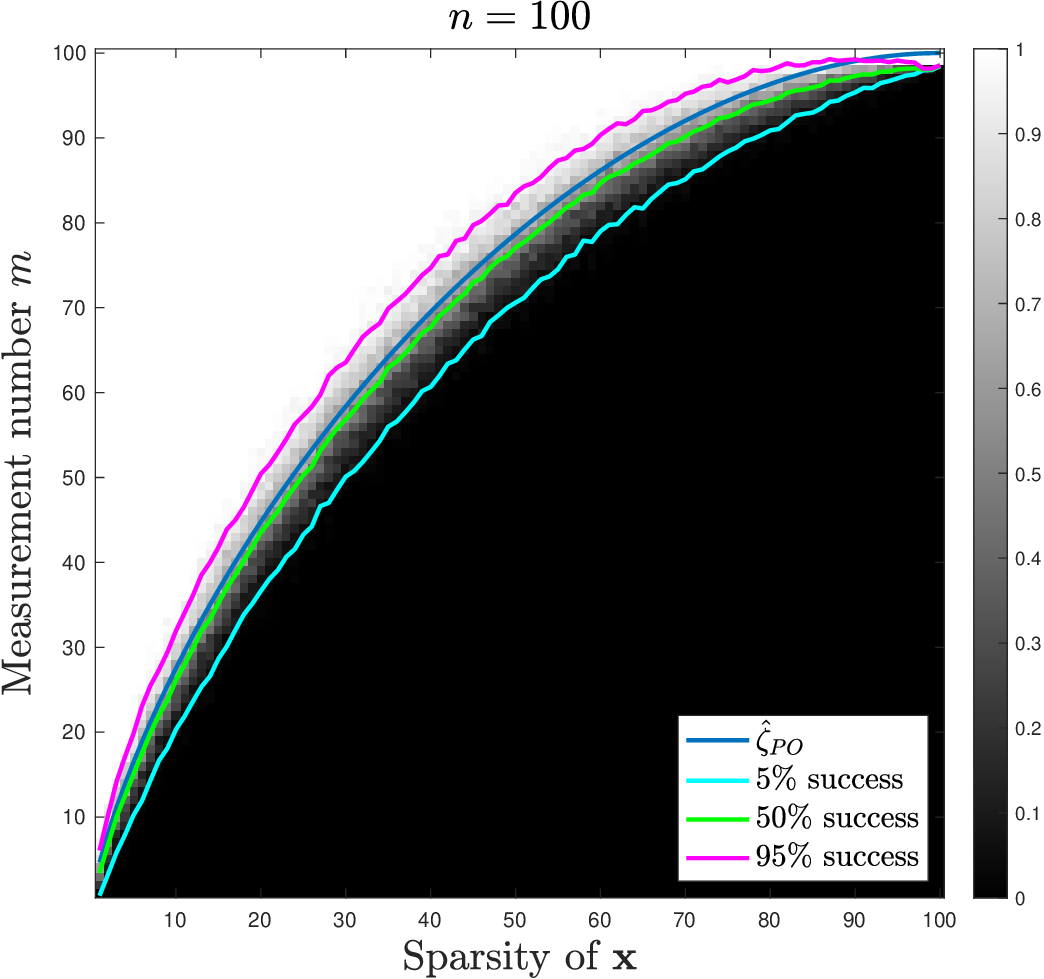}
    \includegraphics[width=0.45\linewidth]{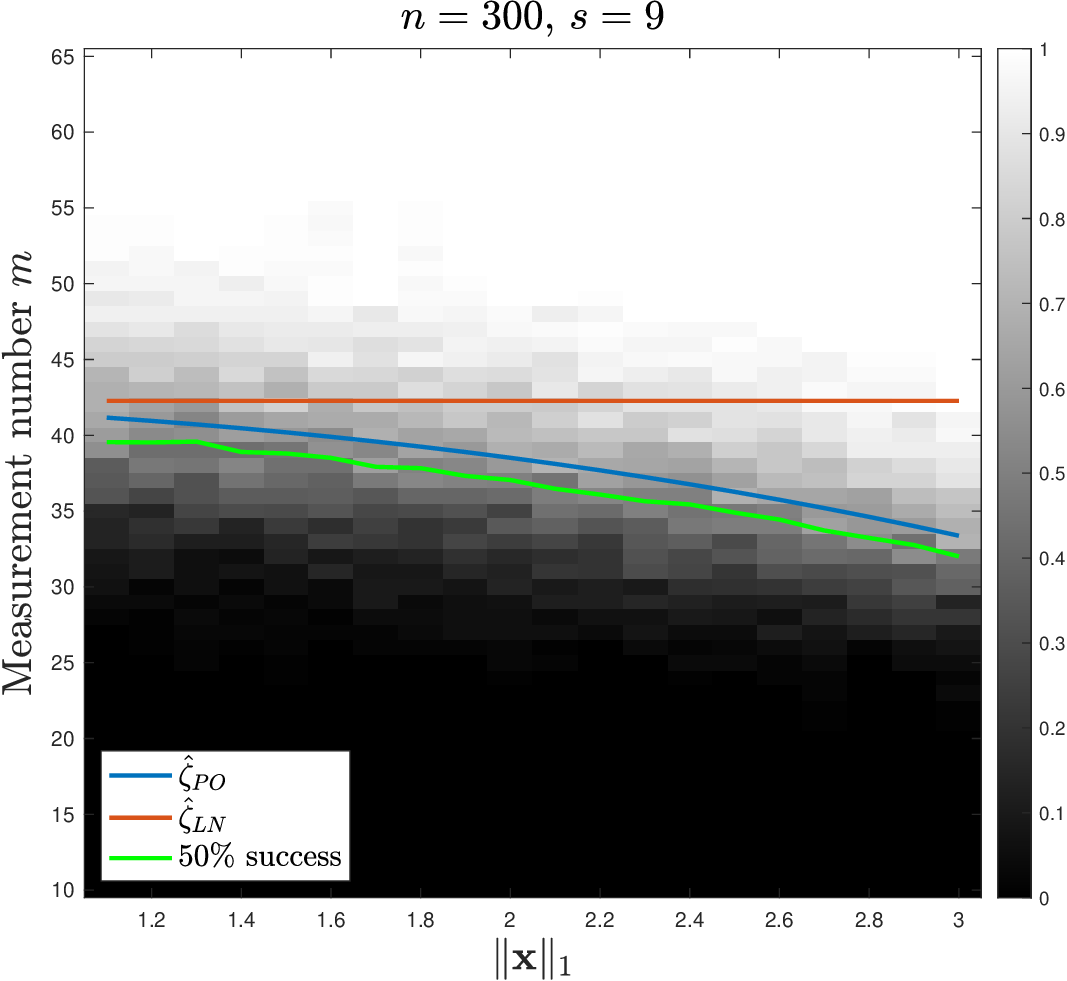}
    \caption{The left  figure shows that the empirical phase transitions of recovering equal amplitude sparse vectors in $\mathbb{R}^{100}$ are consistent with $\hat{\zeta}_{\rm PO}(\bx;\|\cdot\|_1)$.  The right figure shows the empirical success rates of recovering $9$-sparse signals in $\mathbb{R}^{300}$ with $\ell_1$-norm varying between $[1.1,3]$, confirming earlier phase transitions under larger $\|\bx\|_1$.}
    \label{fig:sparse}
\end{figure}

\begin{figure}[!ht]
    \centering
    \includegraphics[width=0.43\linewidth]{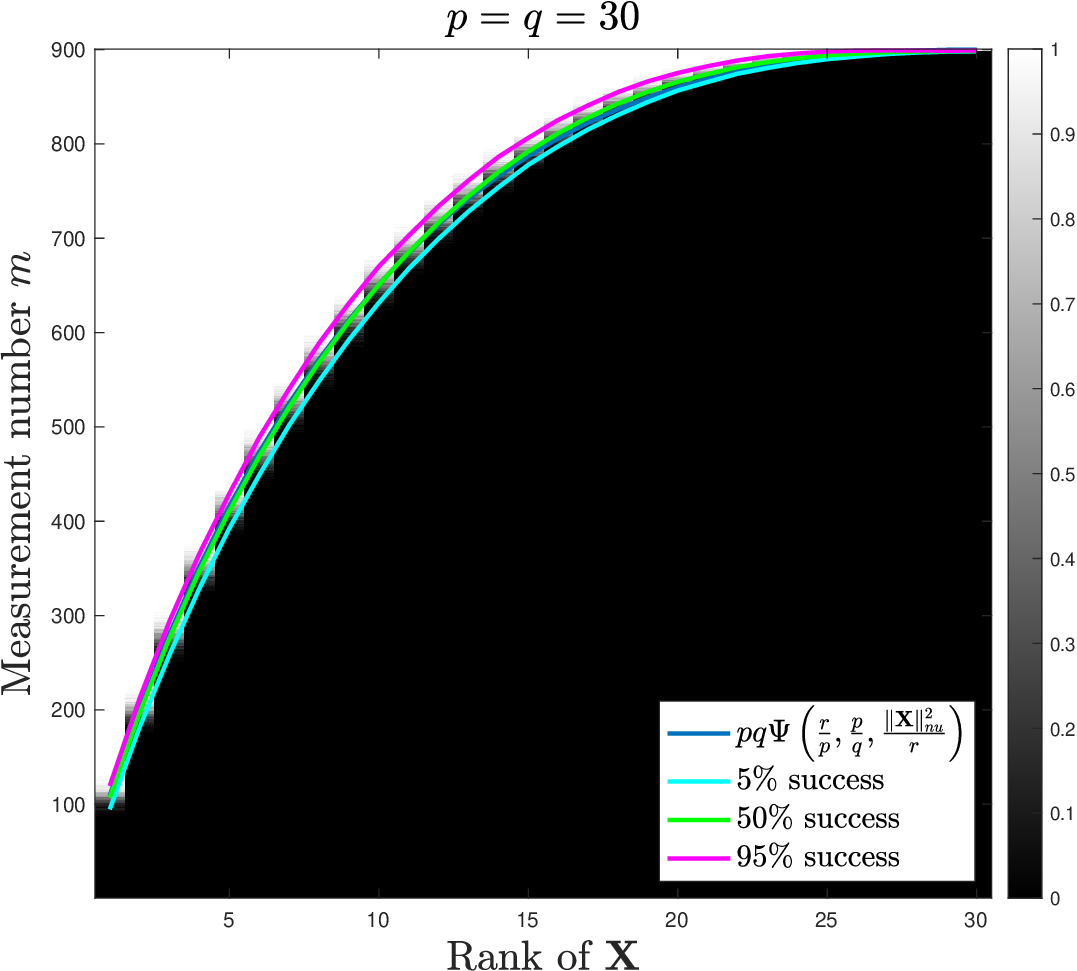}
    \includegraphics[width=0.425\linewidth]{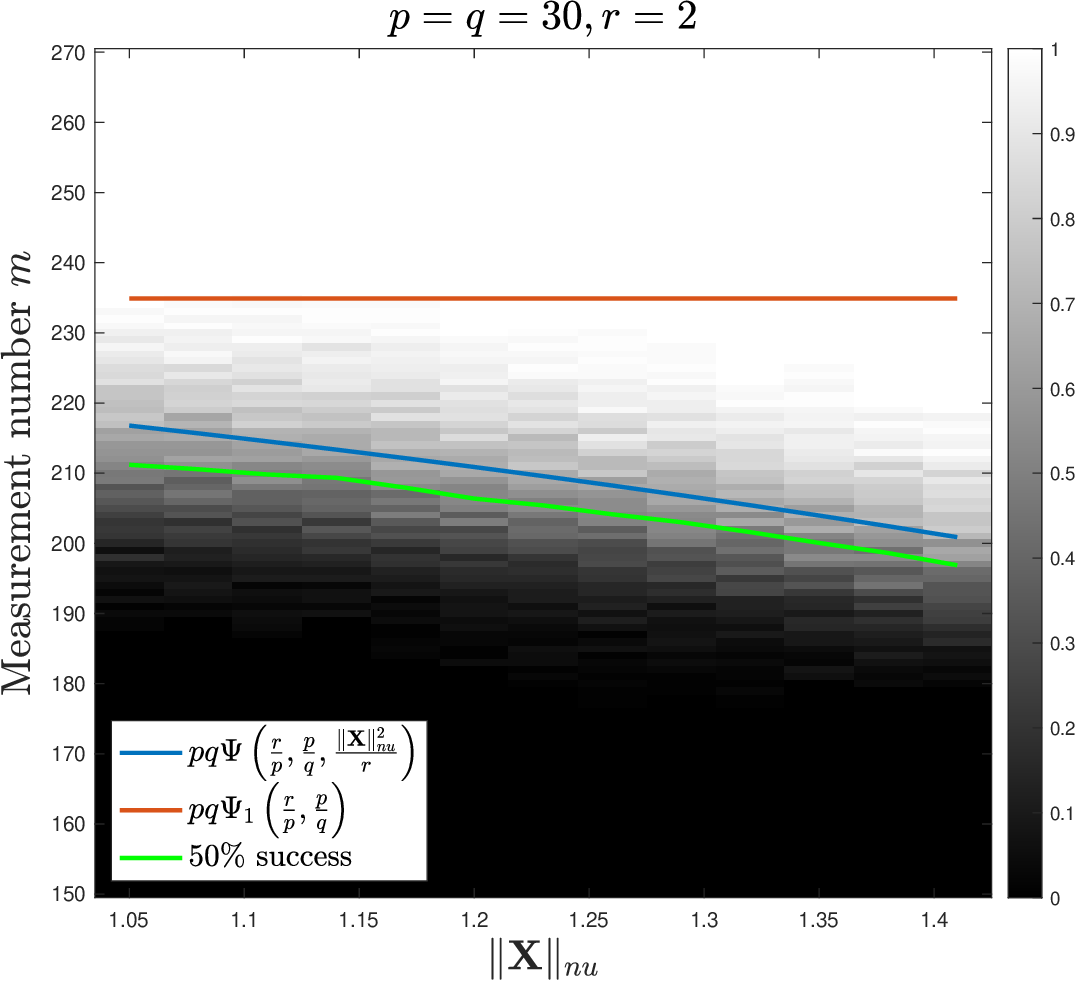}
    \caption{The left figure shows that the empirical phase transitions of recovering low-rank $30\times 30$ matrices can be precisely predicted by $pq\Psi(\frac{r}{p},\frac{p}{q},\frac{\|\bX\|_{nu}^2}{r})$. The right figure shows the empirical success rates of recovering rank-$2$ $30\times 30$ matrices with nuclear norm varying between $[1.05,\sqrt{2}]$, confirming earlier phase transitions under larger $\|\bX\|_{nu}$.}
    \label{fig:LR}
\end{figure}

  All experiments are performed in MATLAB using a MacBook Pro, where optimization problems are solved using the CVX package \cite{grant2014cvx}. We consider   a recovery  successful if the recovered signal is within $10^{-3}$ $\ell_2$-distance from the ground truth signal. For each setting, we simulate 100 independent trials and report the empirical success probability. For Figure \ref{fig:sparse}(left), we generate the underlying signals $\mathbf{x}\in \mathbb{S}^{n-1}$ with varying sparsities. For each sparsity level $s$, we consider equal amplitude signals that have non-zero entries $1/\sqrt{s}$. For Figure \ref{fig:sparse}(right), we keep the sparsity level $s$ constant and generate signals $\mathbf{x}$ with different $\ell_1$ norms. To do so, we consider signals whose non-zero entries take the same values except for one entry. We can then solve a quadratic equation to find the values of these entries, so that both the $\ell_1$ and $\ell_2$ norm requirements of $\mathbf{x}$ are satisfied. The same set of strategies is used to generate the singular values of the low-rank matrices $\mathbf{X}$ in Figure \ref{fig:LR}. The theoretical phase transitions in Figure \ref{fig:sparse} (resp. Figure \ref{fig:LR}) are calculated according to 
\eqref{61}--\eqref{62} (resp. \eqref{eq:zeta_PO}--\eqref{inftau}). The contours for specific success rates are calculated by fitting the data with logistic regression models. 
\section{Conclusion}\label{sec5}
 This paper establishes the phase transition curve for the phase-only compressed sensing (PO-CS) problem. It derives an exact formula for the phase transition location of a recovery algorithm for PO-CS which is based on linearizing the problem and applying basis pursuit. Furthermore, it demonstrates that the phase transition location can be approximated by the statistical dimension of the descent cone of a signal-dependent regularization, \(f_{\bx}(\bu) = f(\bu - (1 - \sqrt{2/\pi}\langle \bx, \bu\rangle)\bx)\), where $f$ is the norm used in the basis pursuit problem and $\bx$ is the true signal to be recovered. Using this result, the paper provides asymptotically exact formulas for the phase transition curve of PO-CS in recovering sparse signals and low-rank matrices. Comparisons with linear compressed sensing confirm that the phase transition for PO-CS occurs at lower values of \(m\).
 
\bibliographystyle{plain}
\bibliography{libr}

\vspace{3mm}

{\centering \huge \bf Appendix \par}

\begin{appendix}
    \section{Deferred Proofs}\label{appA}
    \subsection{Proof of Lemma \ref{lem:suffcon}}\label{provelem1}
    We should first emphasize that if \eqref{1.2} is successful, then $\hat{\bx} = \bx^\star$ where $\bx^{\star}$ is defined as $ \bx^\star =\frac{m\bx}{\|\bPhi\bx\|_1}$ . This is due to the fact that the solution has to satisfy $\bA_{\bz}\bu = \be_1$. 
     We first show that (\ref{succ_con}) implies $\bx^\sharp = \bx$. Note that it suffices to show $\hat{\bx}=\bx^\star$ under (\ref{succ_con}). To this end, we assume $\hat{\bx}\ne \bx^\star$ and show that this assumption will lead to a contradiction. Note that if $f(\hat{\bx})\le f(\bx^\star)$ and $\hat{\bx}\ne\bx^\star$, then 
     $$\bh := \frac{\hat{\bx}-\bx^\star}{\|\hat{\bx}-\bx^\star\|_2}\in \calT_f^*(\bx^\star)=\calT^*_f(\bx),$$
     where $\mathcal{T}_f^*(\bx^\star) = \calT_f^*(\bx)$ follows from the definition of descent cone and the assumption that $f(\cdot)$ is a norm. On the other hand, we have $\bA_{\bz}\hat{\bx}=\bA_{\bz}\bx^\star=\be_1$ and hence $\bA_{\bz}\bh=0$, which contradicts (\ref{succ_con}) that is evidently equivalent to $\min_{\bu\in \calT_f^*(\bx)}\|\bA_{\bz}\bu\|_2>0$. Therefore, we must have $\hat{\bx}=\bx^\star$.

     Next, we show (\ref{fail_con}) implies the failure of the recovery. 
 Recall that the dual cone of $\calC\in \mathbb{R}^n$ is defined as $\calC^\circ = \{\bw\in\mathbb{R}^n:\langle \bw, \bu\rangle\le 0,~\forall \bu \in \calC\}$.
     We begin with 
     \begin{align*}
          \min_{\bv\in\mathbb{S}^m}\min_{\bs\in(\calT_f(\bx))^\circ}\|\bs -\bA^\top_{\bz}\cdot\bv \|_2\ge \min_{\bv\in \mathbb{S}^m}\min_{\bs\in(\calT_f(\bx))^\circ}\max_{\bu\in \calT_f^*(\bx)}\big\langle \bu,\bA_{\bz}^\top\cdot\bv-\bs\big\rangle\ge \min_{\bv\in \mathbb{S}^m}\max_{\bu\in\calT_f^*(\bx)} \bv^\top\bA_{\bz}\bu >0,   
     \end{align*}
     where the second inequality is due to the definition of $(\calT_f(\bx))^\circ$. Since $\calT_f(\bx)$ is closed and not a subspace, according to \cite[Prop. 3.8]{oymak2018universality} we have $0\in \bA_{\bz}\calT_f^*(\bx)$, i.e., $\bA_{\bz}\bh=0$ for some $\bh\in\calT_f^*(\bx)$. This implies that $\bA_{\bz}(\bx^\star+t\bh)=\be_1$   for any $t\in\mathbb{R}$.
     By the definition of descent cones and $\calT_f^*(\bx)=\calT_f(\bx^\star)$, for some $t_0>0$ we have $f(\bx^\star+t_0\bh)\le f(\bx^\star)$. Therefore, $\bx^\star+t_0\bh$ is preferable over $\bx^\star$ to the program (\ref{1.2}), and thus $\hat{\bx}\ne \bx^\star$. In view of $\bx^\sharp = \hat{\bx}/\|\hat{\bx}\|_2$, we need to further show that $\hat{\bx}\ne \lambda\bx^\star$ for any $\lambda\in(0,1)\cup(1,\infty)$. In fact, $\hat{\bx}$ is feasible to (\ref{1.2}) and hence satisfies $\frac{1}{m}\Re(\bz^*\bPhi)\hat{\bx}=1$; if $\hat{\bx}=\lambda\bx^\star$ holds for some $\lambda>0$, then we have $\lambda\cdot \frac{1}{m}\Re(\bz^*\bPhi)\bx^\star =1$, which yields $\lambda =1$, contradicting $\hat{\bx}\ne\bx^\star$. Taken collectively, if (\ref{fail_con}) holds, then $\hat{\bx}\ne \lambda\bx^\star$ for any $\lambda>0$, and hence $\hat{\bx}\ne \lambda\bx$ for any $\lambda>0$. This leads to $\bx^\sharp\ne \bx$, as desired.  

     \subsection{Proof of Lemma \ref{lem2}}\label{provelem2}
     Define $\bPhi_{\bx}=\bPhi\bP_\bx^\top$ and note that since $P_x$ is an orthogonal matrix, $\bPhi_\bx$ has i.i.d. $\calN(0,1)+\calN(0,1)\bi$ entries. 
    Furthermore, denote the $j$-th column of $\bPhi_\bx$ by $\bphi_j$. Using these notations we can now express the phase-only observations as
    $$\bz = \sign(\bPhi\bx)=\sign(\bPhi \bP_\bx^\top \bP_\bx \bx) = \sign(\bPhi_\bx \be_1) = \sign(\bphi_1).$$
    Substituting $\bz$ in $\bA_{\bz}\bP_\bx^\top$ yields 
    \begin{align*}
        \bA_{\bz}\bP_{\bx}^\top& =\begin{bmatrix}
            \frac{1}{m}\Re(\bz^*\bPhi_\bx)\\
            \frac{1}{\sqrt{m}}\Im(\diag(\bz^*)\bPhi_\bx) 
        \end{bmatrix} = 
        \begin{bmatrix}
            \frac{1}{m}\Re(\sign(\bphi_1^*)[\bphi_1~\cdots~\bphi_n])\\
            \frac{1}{\sqrt{m}}\Im\big(\diag(\sign(\bphi_1^*))[\bphi_1~\cdots~\bphi_n]\big)
        \end{bmatrix}\\
        & = \begin{bmatrix}
            \frac{\|\bphi_1\|_1}{m} & \frac{\Re(\sign(\bphi_1^*)\bphi_2)}{m} & \cdots &  \frac{\Re(\sign(\bphi_1^*)\bphi_n)}{m} \\
            0 & \frac{\Im(\diag(\sign(\bphi_1^*))\bphi_2)}{\sqrt{m}} & \cdots & \frac{\Im(\diag(\sign(\bphi_1^*))\bphi_n)}{\sqrt{m}}
        \end{bmatrix} := \begin{bmatrix}
            L'\be_1 & \bA'
        \end{bmatrix},
    \end{align*}
    where to obtain the last equality we have defined $L'=\frac{\|\bphi_1\|_1}{m}$ and $\bA'\in \mathbb{R}^{(m+1)\times(n-1)}$ as the matrix constituted from the last $n-1$ columns of $\bA_{\bz}\bP_\bx^\top$. It is straightforward to see that $L'$ and $\bA'$ are independent of each other, and that  $L'$ has the same distribution as $L$ mentioned in the statement of Lemma \ref{lem2}. The only remaining part is to show that $\bA'\stackrel{d}{=}\frac{\bG}{\sqrt{m}}$. Since $\diag(\sign(\bphi_1^*))$ is a unitary matrix, we have $$\diag(\sign(\bphi_1^*))[\bphi_2,\cdots,\bphi_n]\sim \calN^{m\times (n-1)}(0,1)+ \calN^{m\times (n-1)}(0,1)\bi,$$
    which then implies that 
    $\Re\big(\diag(\sign(\bphi_1^*))[\bphi_2,\cdots,\bphi_n]\big)\sim \calN^{m\times (n-1)}(0,1)$ and $\Im\big(\diag(\sign(\bphi_1^*))[\bphi_2,\cdots,\bphi_n]\big)\sim \calN^{m\times (n-1)}(0,1)$ are independent. This further implies the following: (i) the last $m$ rows of $\bA'$ have i.i.d. $\frac{\calN(0,1)}{\sqrt{m}}$ entries; (2) the first row of $\bA'$ has independent entries distributed as $\frac{\calN(0,1)}{\sqrt{m}}$ (due to being the mean of $m$ i.i.d. $\calN(0,1)$ variables); (iii)  these two parts of $\bA'$ are independent. This proves that $\bA' \stackrel{d}{=}\bG/\sqrt{m}$. 
    \subsection{Proof of Theorem \ref{mainthm}}\label{provethm1}
    Given a fixed $\bx\in \mathbb{S}^{n-1}$, a norm $f$ and measurement number $m$, we denote the success probability as $\mathsf{P}_s(\bx;f,m) = \mathbbm{P}(\bx^\sharp=\bx)$, and the failure probability as $\mathsf{P}_f(\bx;f,m) = \mathbbm{P}(\bx^\sharp \ne \bx)$. Throughout the proof, we will use $\mathsf{P}_s$ and $\mathsf{P}_f$ for short.

    \paragraph{Lower bounding $\mathsf{P}_s:$} We first use Lemma \ref{lem:suffcon} to obtain a lower bound  for $\mathsf{P}_s$ in the following way: 
    \begin{align}\label{step11}
        \mathsf{P}_s &\ge \mathbbm{P}\Big(\min_{\bu\in\calT_f^*(\bx)}\max_{\bv\in\mathbb{S}^m}~\bv^\top \bA_{\bz}\bu>0\Big) = \mathbbm{P}\Big(\min_{\bu\in \bP_{\bx}\calT_f^*(\bx)}\max_{\bv\in\mathbb{S}^m}~\bv^\top \bA_{\bz}\bP_{\bx}^\top \bu >0\Big).
    \end{align}  
    Combining this equation  with Lemma \ref{lem2} leads to:  
    \begin{align}\label{13}
        &\mathsf{P}_s\ge  \mathbbm{P}\Big(\min_{\bu\in \bP_{\bx}\calT_f^*(\bx)}\max_{\bv\in\mathbb{S}^m}~\bv^\top \begin{bmatrix}
            L\be_1 & \frac{\bG}{\sqrt{m}}
        \end{bmatrix} \bu>0\Big)\\\label{expand}
        & = \mathbbm{P}\Big(\min_{\bu\in \bP_{\bx}\calT_f^*(\bx)}\max_{\bv\in\mathbb{S}^m}~Lu_1v_1 + \frac{\bv^\top\bG\tilde{\bu}}{\sqrt{m}}>0\Big)\\
        \label{introdelta}& \ge \mathbbm{P}\Big(\min_{\bu\in \bP_{\bx}\calT_f^*(\bx)}\max_{\bv\in\mathbb{S}^m}~\sqrt{m}Lu_1v_1 + \bv^\top\bG\tilde{\bu}\ge\delta\Big)
    \end{align}
    where $L$ and $\bG$ in (\ref{13}) are as described in Lemma \ref{lem2}. To obtian (\ref{expand}) we have defined  $\tilde{\bu}$ through $\bu= (u_1,\tilde{\bu}^\top)^\top$ (thus $\tilde{\bu}\in \mathbb{R}^{n-1}$) and denote the first entry of $\bv\in \mathbb{R}^{m+1}$ by $v_1$. Finally, (\ref{introdelta}) holds for arbitrary $\delta>0$. Now   introducing $\bg\in \mathbb{R}^{m+1}$ and $\bh\sim \mathbb{R}^{n-1}$ which have i.i.d. $\calN(0,1)$ entries, we  use (\ref{introdelta}) and Lemma \ref{gmt} to obtain 
    \begin{align}\label{16}
        & \mathsf{P}_s \ge 2 \mathbbm{P}\Big(\min_{\bu\in \bP_{\bx}\calT_f^*(\bx)}\max_{\bv\in\mathbb{S}^m}~\|\tilde{\bu}\|_2\bg^\top\bv +\|\bv\|_2 \bh^\top\tilde{\bu}+ \sqrt{m}Lu_1v_1>\delta\Big)-1\\\label{17}
        & = 2 \mathbbm{P}\Big(\min_{\bu\in \bP_{\bx}\calT_f^*(\bx)}~\Big[\bh^\top\tilde{\bu} + \max_{\bv\in\mathbb{S}^m}~\big[\|\tilde{\bu}\|_2\bg + \sqrt{m}Lu_1\be_1\big]^\top \bv\Big]>\delta\Big) -1\\ \label{18}
        & = 2\mathbbm{P}\Big(\min_{\bu\in \bP_{\bx}\calT_f^*(\bx)}~\bh^\top\tilde{\bu}+ \Big\|\|\tilde{\bu}\|_2\bg+ \sqrt{m}Lu_1\be_1\Big\|_2>\delta\Big) -1. 
    \end{align}
    Further using the symmetry of $\bh$   gives 
    \begin{align}\nn
       & \mathsf{P}_{s} \ge 2 \mathbbm{P}\Big(\max_{\bu\in\bP_{\bx}\calT_f^*(\bx)}~\bh^\top\tilde{\bu}-\Big\|\|\tilde{\bu}\|_2 \bg + \sqrt{m}Lu_1\be_1\Big\|_2<-\delta\Big)-1\\
        & = 2 \mathbbm{P}\Big(\forall \bu\in\bP_{\bx}\calT_f^*(\bx),~\bh^\top\tilde{\bu}<\Big\|\|\tilde{\bu}\|_2 \bg + \sqrt{m}Lu_1\be_1\Big\|_2-\delta\Big)-1\label{lowerps}
    \end{align}
    that holds for arbitrary $\delta>0$. 
    \paragraph{Lower bounding $\mathsf{P}_f:$}  
    We can similarly find the following lower bound for $\mathsf{P}_f$:
    \begin{align}\label{step1}
         & \mathsf{P}_f \ge \mathbbm{P}\Big(\min_{\bv\in \mathbb{S}^m}\max_{\bu\in \bP_{\bx}\calT_f^*(\bx)}~\bv^\top \bA_{\bz}\bP_{\bx}^\top\bu >0 \Big) \\\label{step2}
        & \ge \mathbbm{P}\Big(\min_{\bv\in\mathbb{S}^m}\max_{\bu\in \bP_{\bx}\calT_f^*(\bx)}~\sqrt{m}Lu_1v_1 + \bv^\top\bG\tilde{\bu}\ge\delta\Big) \\\label{step3}
        & \ge 2 \mathbbm{P}\Big(\min_{\bv\in\mathbb{S}^m}\max_{\bu\in \bP_{\bx}\calT_f^*(\bx)}~\|\tilde{\bu}\|_2\bg^\top\bv +\|\bv\|_2 \bh^\top\tilde{\bu}+ \sqrt{m}Lu_1v_1>\delta\Big)-1 \\\label{step4}
        &\ge 2\mathbbm{P}\Big(\max_{\bu\in \bP_{\bx}\calT_f^*(\bx)}\min_{\bv\in\mathbb{S}^m}~\|\tilde{\bu}\|_2\bg^\top\bv + \bh^\top\tilde{\bu}+ \sqrt{m}Lu_1v_1>\delta\Big)-1 \\\label{step5}
        & = 2\mathbbm{P}\Big(\max_{\bu\in \bP_{\bx}\calT_f^*(\bx)}~\bh^\top\tilde{\bu}-\Big\|\|\tilde{\bu}\|_2\bg +\sqrt{m}Lu_1\be_1\Big\|_2>\delta\Big)-1\\
        & = 2\mathbbm{P}\Big(\exists \bu\in \bP_{\bx}\calT_f^*(\bx),~\bh^\top\tilde{\bu}>\Big\|\|\tilde{\bu}\|_2\bg +\sqrt{m}Lu_1\be_1\Big\|_2+\delta\Big)-1\label{lowerpf}
    \end{align}
    for arbitrary $\delta>0$; note that (\ref{step1}) corresponds to (\ref{step11}), (\ref{step2}) corresponds to (\ref{13})--(\ref{introdelta}),  and (\ref{step3}) corresponds to (\ref{16}). Furthermore, (\ref{step4}) holds because changing $\min_\bv\max_{\bu}$ to $\max_{\bu}\min_\bv$ cannot increase the final value, and in (\ref{step5}) we optimize over $\bv\in\mathbb{S}^m$ as in (\ref{17})--(\ref{18}). 

    \paragraph{Concentration of $\big\|\|\tilde{\bu}\|_2\bg+\sqrt{m}Lu_1\be_1\big\|_2$:} Next, we establish the concentration bound for $\|\|\tilde{\bu}\|_2\bg+\sqrt{m}Lu_1\be_1\|_2$. First note that 
    \begin{align}
        \label{squarenorm}\big\|\|\tilde{\bu}\|_2\bg+\sqrt{m}Lu_1\be_1\big\|_2^2 = \|\tilde{\bu}\|_2^2 \|\bg\|_2^2 + mL^2u_1^2 + 2 \sqrt{m}Lu_1\|\tilde{\bu}\|_2g_1  
        \end{align}
    where $g_1$ denotes the first entry of $\bg\sim \calN(0,\bI_{m+1})$. Note that $\|\bg\|_2^2$ follows the Chi-squared distribution with $m+1$ degrees of freedom, and thus the concentration of (e.g., \cite{laurent2000adaptive}) Chi-squared distribution implies the following for any $t\in(0,1)$:
    \begin{align}\label{concen1}
        \mathbbm{P}\Big(\Big|\|\bg\|_2^2-(m+1)\Big| \le 5mt\Big) \ge 1-2\exp(-mt^2). 
    \end{align}
    Recall from Lemma \ref{lem2} that $L\stackrel{d}{=}\frac{1}{m}\sum_{i=1}^m ({p_i^2+q_i^2})^{1/2}$ where $p_i,q_i$ are i.i.d. $\calN(0,1)$ random variables. Furthermore, we note that $\frac{1}{m}\sum_{i=1}^m({p_i^2+q_i^2})^{1/2}$ is a $\frac{1}{\sqrt{m}}$-Lipschitz function of $(p_i,q_i)_{i=1}^m$ since 
    $$\Big|\frac{1}{m}\sum_{i=1}^m\sqrt{p_i^2+q_i^2}-\frac{1}{m}\sum_{i=1}^m\sqrt{\tilde{p}_i^2+\tilde{q}_i^2}\Big|\le \frac{1}{m}\sum_{i=1}^m \sqrt{(p_i-\tilde{p}_i)^2+(q_i-\tilde{q}_i)^2}\le \frac{1}{\sqrt{m}}\Big(\sum_{i=1}^m \big[(p_i-\tilde{p}_i)^2+(q_i-\tilde{q}_i)^2\big]\Big)^{1/2}.$$
    Thus, by the concentration of Lipschitz functions of Gaussian vectors (e.g., see \cite[Eq. (1.6)]{ledoux2013probability}),  $L$ concentrates around $\mathbbm{E}L=\sqrt{\frac{\pi}{2}}$  and obeys 
    \begin{align}\label{concen2}
        \mathbbm{P}\Big(\Big|L-\sqrt{\frac{\pi}{2}}\Big|\le\sqrt{2}t\Big)\ge 1-2\exp(-mt^2)
    \end{align}
    for any $t>0$. By restricting to $t\in(0,\frac{1}{4}]$, the above event implies $|L-\sqrt{\frac{\pi}{2}}|\le \frac{\sqrt{2}}{4}$ and hence $L\le 2$. On this event, we have $|2\sqrt{m}Lu_1\|\tilde{\bu}\|_2g_1|\le 2\sqrt{m}|g_1|$ (by using $|2u_1\|\tilde{\bu}\|_2|\le u_1^2+\|\tilde{\bu}\|_2^2=1$), which together with the standard tail bound  $\mathbbm{P}(|g_1|\ge t)\le \exp(-\frac{t^2}{2})$ yields the following: 
    \begin{align}
        \label{concen3}\mathbbm{P}\Big(\big|2\sqrt{m}Lu_1\|\tilde{\bu}\|_1g_1\big|\le 3mt\Big)\ge 1-\exp(-mt^2).  
    \end{align}
    Considering all the results mentioned above, we conclude that for any $t\in(0,\frac{1}{4}]$, with probability at least $1-5\exp(-mt^2)$ the concentration bounds in (\ref{concen1})--(\ref{concen3}) hold. Substituting these bounds into (\ref{squarenorm}) and performing some algebra, we reach the following bound: for any $t\in [\frac{1}{m},\frac{1}{4}]$, it holds with probability at least $1-5\exp(-mt^2)$ that \begin{align*}
       m\Big(\|\tilde{\bu}\|_2^2+\frac{\pi u_1^2}{2}\Big) -8mt \le \big\|\|\tilde{\bu}\|_2\bg+\sqrt{m}Lu_1\be_1\big\|_2^2\le m\Big(\|\tilde{\bu}\|_2^2+\frac{\pi u_1^2}{2}\Big) + 9mt.
    \end{align*}
    This further yields that if $t\in [\frac{1}{m},\frac{4}{81}]$, with probability at least $1-5\exp(-mt^2)$ the event
    \begin{align*}
    \mathscr{E}=\left\{\sqrt{m} \Big[\|\tilde{\bu}\|_2^2+\frac{\pi u_1^2}{2}\Big]^{1/2} - \frac{9\sqrt{m}t}{2}  \le \big\|\|\tilde{\bu}\|_2\bg+\sqrt{m}Lu_1\be_1\big\|_2 \le \sqrt{m} \Big[\|\tilde{\bu}\|_2^2+\frac{\pi u_1^2}{2}\Big]^{1/2} +\frac{9\sqrt{m}t}{2}\right\}
    \end{align*}
    holds. 

    \paragraph{$\mathsf{P}_s\to 1$ when $m$ is above $\zeta_{\rm PO}(\bx;f)$:} We are now ready to show $\mathsf{P}_s\to 1$ when $m$ exceeds $\zeta_{\rm PO}(\bx;f)$ by further lower bounding $\mathsf{P}_s$ via the event $\mathscr{E}$.  If we define $\mathscr{E}_s:=\{\forall\bu\in \bP_{\bx}\calT_f^*(\bx),~\bh^\top\tilde{\bu}<\big\|\|\tilde{\bu}\|_2\bg +\sqrt{m}Lu_1\be_1\big\|_2-\delta\}$,  then from (\ref{lowerps}) we have $\mathsf{P}_s \ge 2\mathbbm{P}(\mathscr{E}_s)-1$. This implies that 
    \begin{align}\nn
        &\mathsf{P}_s \ge 2 \mathbbm{P}(\mathscr{E}_s\text{ and }\mathscr{E})-1\\\label{33}
        &\ge 2 \mathbbm{P}\Big(\Big\{\forall\bu\in\bP_{\bx}\calT_f^*(\bx),~\bh^\top\tilde{\bu}\le \sqrt{m} \Big[\|\tilde{\bu}\|_2^2+\frac{\pi u_1^2}{2}\Big]^{1/2} - 5\sqrt{m}t\Big\}\text{ and }\mathscr{E}\Big)-1 \\
        & \ge 2 \mathbbm{P}\left(\forall\bu\in\bP_{\bx}\calT_f^*(\bx),~\frac{\bh^\top\tilde{\bu}}{(\|\tilde{\bu}\|_2^2+\frac{\pi u_1^2}{2})^{1/2}}\le \sqrt{m} (1-5t)\right) -2 \mathbbm{P}(\mathscr{E}^c) - 1\\
      \label{35}  &\ge 2 \mathbbm{P}\left(\sup_{\bu\in\bP_{\bx}\calT_f^*(\bx)}\frac{\bh^\top\tilde{\bu}}{(\|\tilde{\bu}\|_2^2+\frac{\pi u_1^2}{2})^{1/2}}\le \sqrt{m} (1-5t)\right) -1 - 10\exp(-mt^2).
    \end{align}
    To obtain (\ref{33}) we have used $\mathscr{E}$ to make the event $\mathscr{E}_s$ more restrictive and set $\delta=\frac{\sqrt{m}t}{4}$ (recall that $\delta$ can be arbitrary positive number).
    To obtain a lower bound for \eqref{35} we prove a concentration bound for $$F(\bh):=\sup_{\bu\in \bP_{\bx}\calT_f^*(\bx)}\frac{\bh^\top\tilde{\bu}}{(\|\tilde{\bu}\|_2^2+\pi u_1^2/2)^{1/2}}.$$
    Note that $F(\bh)$ is a $1$-Lipschitz function of the Gaussian rancom vector $\bh \sim\calN(0,\bI_{n-1})$ because 
    $$|F(\bh)-F(\bh_1)| \le \sup_{\bu\in \bP_{\bx}\calT_f^*(\bx)}\frac{|(\bh-\bh_1)^\top\tilde{\bu}|}{(\|\tilde{\bu}\|_2^2+\pi u_1^2/2)^{1/2}}\le \|\bh-\bh_1\|_2.$$
    Thus, the concentration of Lipschitz functions of Gaussian vectors (e.g., see \cite[Eq. (1.6)]{ledoux2013probability}) proves that
    \begin{align}\label{concen_Fh}
        \mathbbm{P}\Big(\big|F(\bh)-\mathbbm{E}[F(\bh)]\big|\le  \sqrt{2m}t\Big)\ge 1- 2\exp(-mt^2).
    \end{align}
    Suppose $m\ge \frac{(\mathbbm{E}[F(\bh)])^2}{(1-7t)^2}$. Then we have $\sqrt{m}(1-5t)\ge \mathbbm{E}[F(\bh)]+2\sqrt{m}t>\mathbbm{E}[F(\bh)]+\sqrt{2m}t$, which allows us to continue from (\ref{35}) to derive 
    \begin{align*}
       &\mathsf{P}_s \ge 2 \mathbbm{P}\big(F(\bh)\le \sqrt{m}(1-5t)\big) - 1- 10\exp(-mt^2) \\
       & \ge  2 \mathbbm{P}\big(F(\bh)\le \mathbbm{E}[F(\bh)]+\sqrt{2m}t\big) - 1- 10\exp(-mt^2) \ge 1-14\exp(-mt^2).
    \end{align*}

    \paragraph{$\mathsf{P}_f\to 1$ when $m$ is below $\zeta_{\rm PO}(\bx;f)$:} Define $\mathscr{E}_f := \{\exists \bu \in\bP_{\bx}\calT_f^*(\bx),~\bh^\top\tilde{\bu}>\|\|\tilde{\bu}\|_2\bg+\sqrt{m}Lu_1\be_1\|_2+\delta\}$. Then (\ref{lowerpf}) implies that $\mathsf{P}_f \ge 2\mathbbm{P}(\mathscr{E}_f)-1$. By using $\mathscr{E}$ to make $\mathscr{E}_f$ more restrictive and setting $\delta = \frac{\sqrt{m}t}{4}$, we obtain 
    \begin{align*}
        &\mathsf{P}_f \ge 2\mathbbm{P}(\mathscr{E}_f\text{ and }\mathscr{E})-1 \\
        &\ge 2\mathbbm{P}\Big(\Big\{\exists \bu\in\bP_{\bx}\calT_f^*(\bx),~\bh^\top\tilde{\bu}\ge \sqrt{m}\Big[\|\tilde{\bu}\|_2^2+\frac{\pi u_1^2}{2}\Big]^{1/2}+5\sqrt{m}t \Big\}\text{ and }\mathscr{E}\Big)-1 \\
        &\ge 2\mathbbm{P}\Big(\exists\bu\in\bP_{\bx}\calT_f^*(\bx),~ \frac{\bh^\top\tilde{\bu}}{(\|\tilde{\bu}\|_2^2+\pi u_1^2/2)^{1/2}} \ge \sqrt{m}(1+5t)\Big)-2 \mathbbm{P}(\mathscr{E}^c) -1 \\
        &\ge 2\mathbbm{P}\big(F(\bh)\ge \sqrt{m}(1+5t)\big)-1-10\exp(-mt^2).
        \end{align*}
        If $m\le \frac{(\mathbbm{E}[F(\bh)])^2}{(1+7t)^2}$, then we have $\mathbbm{E}[F(\bh)]\ge\sqrt{m}(1+7t)$ and hence $\sqrt{m}(1+5t)\le \mathbbm{E}[F(\bh)]-2\sqrt{m}t$. Therefore, we can use (\ref{concen_Fh}) to reach
        \begin{align*}
            \mathsf{P}_f \ge 2\mathbbm{P}\big(F(\bh)\ge \mathbbm{E}[F(\bh)]-2\sqrt{m}t\big)-1-10\exp(-mt^2)\ge 1-14\exp(-mt^2).
        \end{align*}

        \paragraph{Summary:}  Therefore, for any $t\in[\frac{1}{m},\frac{1}{4}]$, $m\ge \frac{(\mathbbm{E}[F(\bh)])^2}{(1-7t)^2}$ implies $\mathsf{P}_s \ge 1-14\exp(-mt^2)$, and $m \le \frac{(\mathbbm{E}[F(\bh)])^2}{(1+7t)^2}$ implies $\mathsf{P}_f \ge 1-14\exp(-mt^2)$. %Replacing $7t$ by $w$, we obtain the following: for any $w \in [\frac{7}{m},\frac{7}{4}]$, we have that $m\ge \frac{(\mathbbm{E}[F(\bh)])^2}{(1-w)^2}$ implies $\mathsf{P}_s \ge 1-14\exp(-\frac{mw^2}{49})$, and that $m\le \frac{(\mathbbm{E}[F(\bh)])^2}{(1+w)^2}$ implies $\mathsf{P}_f \ge 1-14\exp(-\frac{mw^2}{49})$. Regarding the success probability, we restrict our attention to $w\in [\frac{7}{m},\frac{1}{2}]$ and have   $\frac{1}{(1-w)^2}\le (1+2w)^2\le 1+6w$, and so $m \ge (1+6w)(\mathbbm{E}[F(\bh)])^2$ implies $\mathsf{P}_s \ge 1-14\exp(-\frac{mw^2}{49})$. 
        Note that when $t$ is small enough, we have $\frac{1}{(1-7t)^2}\le (1+8t)^2\le 1+17t$, $\frac{1}{(1+7t)^2}\ge (1-7t)^2\ge 1-17t$. Thus, replacing $17t$ by $w$ yields the following for a small enough $c_1$: for any $w\in [\frac{17}{m},c_1]$, $m\ge (1+w)(\mathbbm{E}[F(\bh)])^2$ implies $\mathsf{P}_s\ge 1-14\exp(-\frac{mt^2}{289})$, and $m\le (1-w)(\mathbbm{E}[F(\bh)])^2$ implies $\mathsf{P}_f \ge 1-14\exp(-\frac{mt^2}{289})$. All that remains is to show $\zeta_{\rm PO}(\bx;f) = (\mathbbm{E}[F(\bh)])^2$, that is     
        \begin{align}\label{desir}
        \mathbbm{E}\sup_{\bu\in\bP_{\bx}\calT_f^*(\bx)}\frac{\bh^\top \tilde{\bu}}{(\|\tilde{\bu}\|_2^2+\frac{\pi u_1^2}{2})^{1/2}} = \mathbbm{E}\sup_{{\bu\in\calT_f(\bx),\|\bQ_{\bx}\bu\|_2=1}}\big\langle(\bI_n-\bx\bx^\top)\bg,\bu\big\rangle    
        \end{align}
        where $\bh\in \calN(0,\bI_{n-1}),~\bg\sim\calN(0,\bI_n)$. 
        This is achieved by the following arguments:
        \begin{align}\nn
        &\mathbbm{E}\sup_{\bu\in\bP_{\bx}\calT_f^*(\bx)}\frac{\bh^\top \tilde{\bu}}{(\|\tilde{\bu}\|_2^2+\frac{\pi u_1^2}{2})^{1/2}} = \mathbbm{E}\sup_{\bu\in\bP_{\bx}\calT_f^*(\bx)}\frac{\langle (\bI_n-\be_1\be_1^\top)\bg,\bu\rangle}{(\|\bu\|_2^2+(\frac{\pi}{2}-1)(\bu^\top\be_1)^2)^{1/2}}\\\label{46}
            & = \mathbbm{E}\sup_{\bu\in\calT^*_f(\bx)} \frac{\langle \bP_{\bx}^\top(\bI_n-\be_1\be_1^\top)\bP_\bx\bg,\bu\rangle}{(\|\bu\|_2^2+(\frac{\pi}{2}-1)(\bu^\top\bP_\bx^\top\be_1)^2)^{1/2}} =  \mathbbm{E}\sup_{\bu\in\calT^*_f(\bx)} \frac{\langle (\bI_n-\bx\bx^\top)\bg,\bu\rangle}{\|\bQ_{\bx}\bu\|_2} \\& \label{47}= \mathbbm{E}\sup_{\bu\in\calT_f(\bx)} \frac{\langle (\bI_n-\bx\bx^\top)\bg,\bu\rangle}{\|\bQ_{\bx}\bu\|_2} =  \mathbbm{E}\sup_{\substack{\bu\in\calT_f(\bx)\\\|\bQ_\bx\bu\|_2=1}}\langle(\bI_n-\bx\bx^\top)\bg,\bu\rangle.
        \end{align}
        To obtain (\ref{46}) we substitute $\bu$ with $\bP_{\bx}\bu$ and use $\bP_\bx\bg\stackrel{d}{=}\bg$ in the first equality. Recall that $\bP_{\bx}^\top\be_1=\bx$ and observe $\|\bu\|_2^2+(\frac{\pi}{2}-1) (\bu^\top\bx)^2= \|\bQ_{\bx}\bu\|_2^2$ in the second equality; then, we remove the constraint $\bu\in\mathbb{S}^{n-1}$ in the first equality of (\ref{47}) because $\frac{\langle(\bI_n-\bx\bx^\top)\bg,\bu\rangle}{\|\bQ_{\bx}\bu\|_2}$ is homogeneous in $\bu$, and the final equality holds similarly due to the homogeneity. This completes the proof. 

        \subsection{Proof of Proposition \ref{pro1}}\label{provepro1}
         Recall that $\bQ_{\bx} = \bI_n+ (\sqrt{\frac{\pi}{2}}-1)\bx\bx^\top$ and $\bQ_{\bx}^{-1} = \bI_n- (1-\sqrt{\frac{2}{\pi}})\bx\bx^\top$. If we define $\bw= \bQ_{\bx}\bu$, then we have 
    \begin{align*}
             & \mathbbm{E}\sup_{\substack{\bu\in \calT_f(\bx)\\ \|\bQ_{\bx}\bu\|_2= 1}}\big\langle (\bI_n-\bx\bx^\top)\bg,\bu\big\rangle = \mathbbm{E}\sup_{\substack{\bw\in\bQ_{\bx}\calT_f(\bx)\\\|\bw\|_2=1}} \big\langle  \bQ_{\bx}^{-1}(\bI_n-\bx\bx^\top)\bg,\bw\big\rangle \\
             & =  \mathbbm{E}\sup_{\substack{\bw\in\bQ_{\bx}\calT_f(\bx)\\\|\bw\|_2=1}} \big\langle  (\bI_n-\bx\bx^\top)\bg,\bw\big\rangle  \stackrel{(i)}{=} \mathbbm{E} \sup_{\bw\in \calT^*_{f_{\bx}}(\bx)} \big\langle(\bI_n-\bx\bx^\top)\bg,\bw\big\rangle.
    \end{align*}
   Note that in the above equations we have used the notation $\calT_{f_{\bx}}^*(\bx) = \calT_{f_{\bx}}(\bx)\cap \mathbb{S}^{n-1}$. Furthermore, $(i)$ holds because $\bQ_{\bx}\bx=\sqrt{\frac{\pi}{2}}\bx$ and hence
    \begin{align*}
        & \bQ_{\bx}\calT_f(\bx) = \big\{\bQ_{\bx}\bv: \exists t>0,~f(\bx+t\bv)\le f(\bx)\big\} \\
        & = \big\{\bw: \exists t>0,~f\big(\bQ_{\bx}^{-1}\big[\bQ_{\bx}\bx+ t\bw\big]\big)\le f\big(\bQ_{\bx}^{-1}\bQ_{\bx}\bx\big)\big\}\\
        & = \big\{\bw : \exists t>0,~f(\bQ_{\bx}^{-1}[\bx+t\bw])\le f(\bQ_{\bx}^{-1}\bx)\big\} = \calT_{f_{\bx}}(\bx).
    \end{align*}
    Moreover, we have $\mathbbm{E}\sup_{\bw\in \calT^*_{f_{\bx}}(\bx)}\langle (\bI_n-\bx\bx^\top)\bg,\bw\rangle\le \omega\big(\calT^*_{f_{\bx}}(\bx)\big)$ and $\mathbbm{E}\sup_{\bw\in \calT^*_{f_{\bx}}(\bx)}\langle (\bI_n-\bx\bx^\top)\bg,\bw\rangle\ge \omega\big(\calT^*_{f_{\bx}}(\bx)\big) - \mathbbm{E}\sup_{\bw\in \calT^*_{f_{\bx}}(\bx)}|\bx^\top\bg||\bx^\top\bw|\ge \omega\big(\calT^*_{f_{\bx}}(\bx)\big)-\sqrt{2/\pi}$. Taking square and using $\omega^2(\calC^*)\le \delta(\calC)\le \omega^2(\calC^*)+1$ yields the desired bound. 
        \subsection{Proof of Theorem \ref{thm2}}\label{provethm2}
     We only need to estimate $\delta(\calT_{f_{\bx}}(\bx))$. By using \cite[Thm. 4.3]{amelunxen2014living} we obtain 
    \begin{align}\label{livingeq}
    \inf_{\tau\ge 0}~\mathbbm{E}\Big[\dist^2(\bg, \tau\cdot\partial f_{\bx}(\bx))\Big] - \frac{2\rad(\partial f_{\bx}(\bx))}{f_{\bx}(\bx)}\le  \delta(\calT_{f_{\bx}}(\bx))\le \inf_{\tau\ge 0}~\mathbbm{E}\Big[\dist^2(\bg, \tau\cdot \partial f_{\bx}(\bx))\Big].
    \end{align}
    Moreover, we have $f_{\bx}(\bx) = f(\bQ_{\bx}^{-1}\bx) = \sqrt{2/\pi}f(\bx)$ and
    \begin{align*}
        &\partial f_{\bx}(\bx) = \big\{\bu\in\mathbb{R}^n:f_{\bx}(\by)\ge f_{\bx}(\bx) + \langle \bu, \by-\bx\rangle,~\forall\by\in\mathbb{R}^n\big\}\\
        &= \big\{\bu\in\mathbb{R}^n:f(\bQ_{\bx}^{-1}\by)\ge \sqrt{\frac{2}{\pi}}f(\bx)+ \langle \bu, \by-\bx\rangle ,~\forall \by\in \mathbb{R}^n\big\}\\
        &= \Big\{\bu\in \mathbb{R}^n:\sqrt{\frac{2}{\pi}}f(\bw)\ge \sqrt{\frac{2}{\pi}}f(\bx)+ \big\langle \bu, \sqrt{\frac{2}{\pi}}\bQ_{\bx}\bw -\sqrt{\frac{2}{\pi}}\bQ_{\bx}\bx\big\rangle,~\forall\bw\in\mathbb{R}^n\Big\}\\
        & = \{\bQ_{\bx}^{-1}\bv :f(\bw)\ge f(\bx)+ \langle \bv,\bw-\bx\rangle\} = \bQ_{\bx}^{-1}\partial f(\bx),
    \end{align*}
    where in the third equality we substitute $\by$ by $\sqrt{2/\pi}\bQ_{\bx}\bw$. Substituting these into (\ref{livingeq}) yields 
    $$\hat{\zeta}_{\rm PO}(\bx;f) - \frac{\sqrt{2\pi}}{f(\bx)} \rad(\bQ_{\bx}^{-1}\partial f(\bx)) \le \delta(\calT_{f_{\bx}}(\bx))\le \hat{\zeta}_{\rm PO}(\bx;f).$$
    Finally,  combining this result with Proposition \ref{pro1} establishes the desired result. 
    \subsection{Proof of Theorem \ref{spa_formu}}\label{provethm3}
     We only need to prove (\ref{61})–(\ref{62}), as (\ref{63}) directly follows from Theorem \ref{thm2}. Note that we have $$\partial f(\bx) = \big\{\bu=(u_i)\in\mathbb{R}^n:u_i=\sign(x_i),~i\in {\rm supp}(\bx); ~|u_i|\le1 ,~i\notin {\rm supp}(\bx)\big\},$$ 
     which leads to $$\bQ_{\bx}^{-1}\partial f(\bx) = \big\{\bu=(u_i)\in\mathbb{R}^n:u_i = \sign(x_i)-(1-\sqrt{2/\pi})\|\bx\|_1x_i,~i\in{\rm supp}(\bx);~|u_i|\le 1,~i\notin {\rm supp}(\bx)\big\}.$$ Therefore, we have the calculations
    \begin{align*}
        &\mathbbm{E}\Big[\dist^2(\bg,\tau\cdot \bQ_{\bx}^{-1}\partial f(\bx))\Big]\\&= \mathbbm{E}\Big[\sum_{i\in{\rm supp}(\bx)}\Big(g_i-\tau\sign(x_i)+\tau\big(1-\sqrt{2/\pi}\big)\|\bx\|_1x_i\Big)^2+\sum_{i\notin{\rm supp}(\bx)}\big({\rm shrink}(g_i;\tau)\big)^2\Big]\\
        & = \sum_{i\in{\rm supp}(\bx)}\Big(1+\tau^2+\tau^2(1-\sqrt{2/\pi})^2\|\bx\|_1^2x_i^2-2\tau^2(1-\sqrt{2/\pi})|x_i|\|\bx\|_1\Big) + (n-s)\mathbbm{E}\Big[\big({\rm shrink}(g_i;\tau)\big)^2\Big]\nn\\
        & = s(1+\tau^2)- \tau^2 \|\bx\|_1^2 \big(1-\frac{2}{\pi}\big)+ (n-s)\sqrt{\frac{2}{\pi}}\int_{\tau}^\infty (w-\tau)^2\exp\big(-\frac{w^2}{2}\big)~\text{d}w,
    \end{align*}
    where ${\rm shrink}(g_i;\tau)$ is the thresholded version of $g_i$ that equals $0$ if $|g_i|\le \tau$, and equals $g_i-\tau$ if $g_i>\tau$, and equals $g_i+\tau$ if $g_i<-\tau$. It is straightforward to see that $\mathbbm{E}[{\rm shrink}^2(g_i;\tau)] = \sqrt{\frac{2}{\pi}}\int_{\tau}^\infty (w-\tau)^2\exp(-\frac{w^2}{2})~\text{d}w$ (e.g., Equation (67) in \cite{chandrasekaran2012convex}). Now by the definitions of $\psi(u,v)$ and $  \hat{\zeta}_{\rm PO}(\bx;\|\cdot\|_1)$, we conclude that $  \hat{\zeta}_{\rm PO}(\bx;\|\cdot\|_1)=n\psi(\frac{s}{n},\frac{\|\bx\|_1^2}{s})$, as desired.  \subsection{Proof of Theorem \ref{lowrankfor}}\label{provethm4}
  Suppose that the singular value decomposition of $\bX$ is given by  
\begin{align}
    \bX = [\bU_1~\bU_2]\begin{bmatrix}
    \bSigma_r & 0 \\
    0 & 0 
\end{bmatrix}\begin{bmatrix}
\bV_1^\top \\
\bV_2^\top
\end{bmatrix}, \label{svd}
\end{align}
where $\bU=[\bU_1~\bU_2]$ with $\bU_1\in \mathbb{R}^{p\times r}$, $\bU_2\in \mathbb{R}^{p\times (p-r)}$ and $\bV= [\bV_1~\bV_2]$ with $\bV_1\in \mathbb{R}^{q\times r}$, $\bV_2\in\mathbb{R}^{q\times (q-r)}$ are orthogonal matrices, and $\bSigma_r = \diag(\sigma_1,...,\sigma_r)$ are the diagonal matrix containing the singular values $\sigma_1\ge \sigma_2\ge ...\ge \sigma_r>0$. We shall first prove \begin{align}
    \frac{\hat{\zeta}_{\rm PO}(\bX; \|\cdot\|_{nu})}{pq}\to \Psi\Big(\rho,\nu,\frac{\|\bX\|_{nu}^2}{r}\Big).\label{first}
\end{align} With the convention in (\ref{svd}), by letting $\calL_{\bX}:\bA \mapsto \bA - (1-\sqrt{2/\pi})\langle \bA,\bX\rangle\bX$ be the linear operator in $\mathbb{R}^{p\times q}$ corresponding to $\bQ_{\bx}^{-1}$, the   subdifferential   and its scaled version $\bQ_{\bx}^{-1}\partial f(\bX)$ can be formulated as 
    \begin{gather*}
        \partial{f}(\bx) = \partial{(\|\cdot\|_{nu})}(\bX)= \left\{\bU\begin{bmatrix}
            \bI_r  & 0 \\
            0  &  \bW
        \end{bmatrix}\bV^\top: \|\bW\|_{op}\le 1 \right\},\\
        \bQ_{\bx}^{-1}\partial f(\bx) = \calL_{\bX}\partial{(\|\cdot\|_{nu})}(\bX) = \left\{\bU\begin{bmatrix}
            \bI_r - (1-\sqrt{2/\pi})\|\bX\|_{nu}\bSigma_{r} & 0 \\
            0  & \bW
    \end{bmatrix}\bV^\top:\|\bW\|_{op}\le 1\right\}.
    \end{gather*}
    Suppose $\bG\in \mathbb{R}^{p\times q}$ has i.i.d. $\calN(0,1)$ entries,  for any $\tau\ge 0$ we have
    \begin{align*}
       &\dist^2(\bG, \tau \calL_{\bX}\partial f(\bX)) \\&= \inf_{ \|\bW\|_{op}\le 1} \left\|\bG -\bU\begin{bmatrix}
            \tau\bI_r - \tau(1-\sqrt{2/\pi})\|\bX\|_{nu}\bSigma_{r} & 0 \\
            0  & \tau\bW
    \end{bmatrix}\bV^\top \right\|_F^2 \\
    & = \inf_{\|\bW\|_{op}\le 1}\left\|\bU^\top \bG\bV - \begin{bmatrix}
            \tau\bI_r - \tau(1-\sqrt{2/\pi})\|\bX\|_{nu}\bSigma_{r} & 0 \\
            0  & \tau\bW
    \end{bmatrix}\right\|_F^2\\
    & = \big\|\bG_{11}-\tau\bI_r +\tau(1-\sqrt{2/\pi})\|\bX\|_{nu}\bSigma_r\big\|_F^2 + \|\bG_{12}\|_F^2 + \|\bG_{21}\|_F^2 + \inf_{\|\bW\|_{op}\le 1}\|\bG_{22}-\tau\bW\|_F^2,
    \end{align*}
    where in the last equality we partition $\bU^\top\bG\bV \sim\calN^{p\times q}(0,1)$ into $$\begin{bmatrix}
        \bG_{11} & \bG_{12} \\
        \bG_{21} & \bG_{22}
    \end{bmatrix}~~\text{with $r\times r$ matrix $\bG_{11}$ and $(p-r)\times (q-r)$ matrix $\bG_{22}$.}$$  By taking expectation we arrive at
    \begin{align*}
       & \mathbbm{E}\dist^2(\bG, \tau \calL_{\bX}\partial f(\bX))  = \mathbbm{E}\Big[\|\bG_{11}\|_F^2+\tau^2r+ \tau^2(1-\sqrt{2/\pi})^2\|\bX\|_{nu}^2\\
        & \qquad-2(1-\sqrt{2/\pi})\tau^2\|\bX\|_{nu}^2 + \|\bG_{12}\|_F^2+\|\bG_{21}\|_F^2+  \inf_{\|\bW\|_{op}\le 1}\|\bG_{22}-\tau\bW\|_F^2 \Big]\\
        & = r\Big[p+q-r+\Big(1-\big(1-\frac{2}{\pi}\big)\frac{\|\bX\|_{nu}^2}{r}\Big)\tau^2\Big] + \mathbbm{E}  \Big[\inf_{\|\bW\|_{op}\le 1}\|\bG_{22}-\tau\bW\|_F^2 \Big] .  
        \end{align*}
    By substituting $\tau$ by $\sqrt{q-r}\cdot\tau$ and letting $\frac{r}{p}:=\bar{\rho}\in(0,1)$, $\frac{p}{q}:=\bar{\nu}\in(0,1]$, we obtain
    \begin{align*}
        & \hat{\zeta}_{\rm PO}(\bX;\|\cdot\|_{nu}) = \inf_{\tau\ge 0}\dist^2(\bG,\tau \calL_{\bX}\partial f(\bX)) = \inf_{\tau\ge 0}\dist^2\big(\bG,\sqrt{q-r}\cdot\tau \calL_{\bX}\partial f(\bX)\big)\\
        & = \inf_{\tau\ge 0}~ r\Big[p+q-r+\Big(1-\big(1-\frac{2}{\pi}\big)\frac{\|\bX\|_{nu}^2}{r}\Big)(q-r)\tau^2\Big] + \mathbbm{E}  \Big[\inf_{\|\bW\|_{op}\le 1}\|\bG_{22}-\sqrt{q-r}\cdot\tau\bW\|_F^2 \Big] \\
        \nn & = pq\cdot \inf_{\tau\ge 0}\Big\{ \frac{r}{p}\Big[\frac{p-r}{q}+1 + \Big(1-\big(1-\frac{2}{\pi}\big)\frac{\|\bX\|_{nu}^2}{r}\Big)\frac{(q-r)\tau^2}{q}\Big]+ \frac{(p-r)(q-r)}{pq}\mathbbm{E}  \Big[\inf_{\|\bW\|_{op}\le 1}\frac{\|\tilde{\bG}_{22}-\tau\bW\|_F^2}{p-r} \Big]\Big\}\\
        & = pq \cdot \inf_{\tau\ge 0}\Big\{\bar{\rho}\bar{\nu}+ \bar{\rho}(1-\bar{\rho}\bar{\nu})\Big[1+\tau^2\Big(1-\big(1-\frac{2}{\pi}\big)\frac{\|\bX\|_{nu}^2}{r}\Big)\Big]+(1-\bar{\rho})(1-\bar{\rho}\bar{\nu})\mathbbm{E}  \Big[\inf_{\|\bW\|_{op}\le 1}\frac{\|\tilde{\bG}_{22}-\tau\bW\|_F^2}{p-r}\Big]\Big\}\nn,
    \end{align*}
    where $\tilde{\bG}_{22} = {\bG_{22}}/{\sqrt{q-r}}$ has i.i.d. $\calN(0,(q-r)^{-1})$ entries. 
    Moreover, by the calculations in \cite[App. D.3]{amelunxen2014living} that is built upon Marcenko-Pastur law, as $p,q\to\infty$ with the limiting ratios $\bar{\rho}=\frac{r}{p}\to \rho \in(0,1)$ and $\bar{\nu}=\frac{p}{q}\to \nu\in(0,1]$, pointwise in $\tau\ge 0$ we have 
    \begin{align*}
        \mathbbm{E}  \Big[\inf_{\|\bW\|_{op}\le 1}\frac{\|\tilde{\bG}_{22}-\tau\bW\|_F^2}{p-r}\Big] \to  \int_{\max\{a_-,\tau\}}^{a_+}(b-\tau)^2\varphi_{y}(b)~\mathrm{d}b
    \end{align*}
    with $y,a_{\pm}$ and $\varphi_y(b)$ being defined in the statement. Moreover, since the involved functions are stricitly convex in $\tau$, we arrive at (\ref{first}); see \cite[App. D.3]{amelunxen2014living} for details. 
     We now invoke Theorem \ref{thm2} to obtain $$\hat{\zeta}_{\rm PO}(\bX;\|\cdot\|_{nu}) - O\big(\sqrt{r(p+q)}\big) - O\Big(\frac{p+q}{\|\bX\|_{nu}}\Big) \le\zeta_{\rm PO}(\bX;\|\cdot\|_{nu})\le \hat{\zeta}_{\rm PO}(\bX;\|\cdot\|_{nu}),$$which  implies \begin{align}
         \frac{\hat{\zeta}_{\rm PO}(\bX;\|\cdot\|_{nu})-\zeta_{\rm PO}(\bX;\|\cdot\|_{nu})}{pq}\to 0.\label{second}
     \end{align}  
     Combining (\ref{first}) and (\ref{second}) yields the desired claim. 
\end{appendix}
\end{document}